\def\Nset{\mathbb{N}}
\def\Rset{\mathbb{R}}
\def\Sset{\mathbb{S}}
\DeclareMathOperator{\Set}{set}
\DeclareMathOperator{\orig}{orig}
\DeclareMathOperator{\dest}{dest}
\DeclareMathOperator{\sset}{{\mathrm set}}
\newcommand{\e}{\epsilon}
\renewcommand{\l}{{\cal L}}
\newcommand{\f}{{\cal F}}
\newcommand{\set}[1]{\{#1\}}
\newcommand{\wI}{{w_{\mathcal I}}}
\newcommand{\WI}{{W_{\mathcal I}}}
\newcommand{\zero}{\overline{0}}
\newcommand{\one}{\overline{1}}
\newcommand{\Ssem}{(\Sset, \oplus, \otimes, \zero, \one)}
\newcommand{\Tsem}{(\Rset_+ \!\cup\! \{+\infty\}, \min, +, +\infty, 0)}
\newcommand{\R}{\mathsf R}
\newcommand{\Dis}{{\sc Disambiguation}}
\newcommand{\predis}{$\R$-pre-disambi\-guation}
\newcommand{\Predis}{$\R$-Pre-disambiguation}
\newcommand{\predisable}{$\R$-pre-disambiguable}
\newcommand{\predisability}{$\R$-pre-disambiguability}
\newcommand{\ipsfig}[2]{\scalebox{#1}{\psfig{#2}}}
\newcommand{\ignore}[1]{}
\title{On the Disambiguation of Weighted Automata}
\titlerunning{On the Disambiguation of Weighted Automata}
\author{Mehryar Mohri\inst{1,2}
\and Michael D. Riley\inst{2}}
\institute{Courant Institute of Mathematical Sciences, New York, NY. 
\and Google Research, New York, NY.}
\begin{document}
\maketitle

\begin{abstract}

  We present a disambiguation algorithm for weighted automata. The algorithm admits two main stages: a pre-disambiguation stage followed by a transition removal stage. We give a detailed description of the algorithm and the proof of its correctness. The algorithm is not applicable to all weighted automata but we prove sufficient conditions for its applicability in the case of the tropical semiring by introducing the \emph{weak twins property}. In particular, the algorithm can be used with all acyclic weighted automata, relevant to applications. While disambiguation can sometimes be achieved using determinization, our disambiguation algorithm in some cases can return a result that is exponentially smaller than any equivalent deterministic automaton. We also present some empirical evidence of the space benefits of disambiguation over determinization in speech recognition and machine translation applications.

\end{abstract}

\section{Introduction}

Weighted finite automata and transducers are widely used in
applications.  Most modern speech recognition systems used for
hand-held devices or spoken-dialog applications use weighted automata
and their corresponding algorithms for the representation of their
models and their efficient combination and search
\cite{MohriPereiraRiley2008,AllauzenBensonChelbaRileySchalkwyk2012}.
Similarly, weighted automata are commonly used for a variety of tasks
in machine translation \cite{iglesias2011} and other
natural language processing applications \cite{KaplanKay1994},
computational biology \cite{Durbin1998}, image processing
\cite{AlbertKari2009}, optical character recognition \cite{Bruel2008},
and many other areas.

A problem that arises in several applications is that of
\emph{disambiguation of weighted automata}: given an input weighted
automaton, the problem consists of computing an equivalent weighted
automaton that is \emph{unambiguous}, that is one with no two
accepting paths labeled with the same string. The need for
disambiguation is often motivated by the computation of the marginals
given a weighted transducer, or the common problem of determining the
most probable string or more generally the $n$ most likely strings, $n
\geq 1$, of a \emph{lattice}, an acyclic weighted automaton generated
by a complex model, such as those used in machine translation, speech
recognition, information extraction, and many other natural language
processing and computational biology systems. A lattice compactly
represents the model's most likely hypotheses. It defines a
probability distribution over the strings and is used as follows: the
weight of an accepting path is obtained by multiplying the weights of
its component transitions and the weight of a string obtained by
summing up the weights of accepting paths labeled with that string.
In general, there may be many accepting paths labeled with a given
string. Clearly, if the lattice were unambiguous, a standard
shortest-paths or $n$-shortest-paths algorithm \cite{Eppstein1998}
could be used to efficiently determine the $n$ most likely strings.
When the lattice is not unambiguous, the problem is more complex and
can be solved using weighted determinization \cite{MohriRiley2002}.
An alternative solution, which we will show has benefits, consists of
first finding an unambiguous weighted automaton equivalent to the
lattice and then running an $n$-shortest-paths algorithm on the
resulting weighted automaton.

In general, one way to determine an equivalent unambiguous weighted
automaton is to use the weighted determinization algorithm
\cite{Mohri1997}.  This, however, admits several drawbacks. First,
weighted determinization cannot be applied to all weighted
automata. This is both because not all weighted automata admit an
equivalent deterministic weighted automaton but also because even for
some that do, the weighted determinization algorithm may not
halt. Sufficient conditions for the application of the algorithm have
been given \cite{Mohri1997,AllauzenMohri2003}. In particular the
algorithm can be applied to all acyclic weighted
automata. Nevertheless, a second issue is that in some cases where
weighted determinization can be used, the size of the resulting
deterministic automaton is prohibitively large.

This paper presents a new disambiguation algorithm for weighted
automata. As we shall see, the algorithm applies to a broader family
of weighted automata. We show that, for the tropical semiring, if a
weighted automaton can be determinized using the algorithm of
\cite{Mohri1997}, then it can also be disambiguated using the
algorithm presented in this paper. Furthermore, for some weighted
automata, the size of the unambiguous weighted automaton returned by
our algorithm is exponentially smaller than that of any equivalent
deterministic weighted automata. In particular, our algorithm leaves
the input unchanged if it is unambiguous, while the size of the
automaton returned by determinization for some unambiguous weighted
automata is exponentially larger.  We also present empirical evidence
that shows the benefits of weighted disambiguation over
determinization in applications. Our algorithm applies in particular
to unweighted finite automata\ignore{, generalizing the disambiguation
algorithm of \cite{Mohri2012}}. Note that it is known that for some
non-deterministic finite automata of size $n$ the size of an
equivalent unambiguous automaton is at least $\Omega(2^{\sqrt{n}})$
\cite{Schmidt1978}, which gives a lower bound on the time and space
complexity of any disambiguation algorithm for finite automata.

We are not aware of any prior disambiguation algorithm for weighted
automata that is broadly applicable.\footnote{An algorithm
  of Eilenberg \cite{Eilenberg1974} bears the same name but it is in
  fact designed for an entirely different problem.} Nevertheless, our
algorithm is limited in some ways. First, not all weighted automata
admit an equivalent unambiguous weighted automaton. But, even for some
that do, our algorithm may not succeed. The situation is thus similar
to that of weighted determinization. However, we present sufficient
conditions under which our algorithm can be used, which covers all
acyclic weighted automata. Our algorithm has two stages. The first
stage called \emph{pre-disambiguation} constructs a weighted automaton
which has some key properties. In particular, the weight of
all paths leaving the initial state and labeled with the same string
is the same. The second stage consists of removing some transitions to
make the result unambiguous. Our disambiguation algorithm can be
applied whenever pre-disambiguation terminates.

The paper is organized as follows. In Section~\ref{sec:preliminaries},
we introduce some preliminary definitions and notation relevant to the
description of our algorithm.  Section~\ref{sec:predis} describes our
pre-disambiguation algorithm and proves some key properties of its
result. We describe in fact a family of pre-disambiguation algorithms
parameterized by a relation $R$ over the set of pairs of states. A
simple instance of that relation is for two states to be equivalent
when they admit a path labeled by the same string leading to a final
state.  In Section~\ref{sec:dis}, we describe the second stage, which
consists of transition removal, and prove the correctness of our
disambiguation algorithm.  In Section~\ref{sec:sufficient}, we
introduce the notion of \emph{weak twins property} which we use to
prove the sufficient conditions for the application of
pre-disambiguation and thus the full disambiguation algorithm. The
proofs for this section are given in the case of weighted automata
over the tropical semiring.  Finally, in
Section~\ref{sec:experiments}, we present experiments that compare
weighted disambiguation to determinization in speech recognition
and machine translation applications.  
Our implementation of these algorithms used in these
experiments is available through a freely available OpenFst library
\cite{AllauzenRileySchalkwykSkutMohri2007}. Detailed proofs for most
of our results are given in the appendix.

\section{Preliminaries}
\label{sec:preliminaries}

Given an alphabet $\Sigma$, we will denote by $|x|$ the length of a
string $x \in \Sigma^*$ and by $\e$ the \emph{empty string} for which
$|\e| = 0$.

The weighted automata we consider are defined over a broad class of
\emph{semirings}. A semiring is a system $\Ssem$ where $(\Sset,
\oplus, \zero)$ is a commutative monoid with $\zero$ as the identity
element for $\oplus$, $(\Sset, \otimes, \one)$ is a monoid with $\one$
as the identity element for $\otimes$, $\otimes$ distributes over
$\oplus$, and $\zero$ is an annihilator for $\otimes$.

A semiring is said to be \emph{commutative} when $\otimes$ is
commutative. Some familiar examples of (commutative) semirings are the
tropical semiring $\Tsem$ or the semiring of non-negative integers
$(\Nset, +, \times, 0, 1)$.  The multiplicative operation of a
semiring $\Ssem$ is said to be {\em cancellative} if for any $x$, $x'$
and $z$ in $\Sset$ with $z \neq \zero$, $x \otimes z = x' \otimes z$
implies $x = x'$. When that property holds, the semiring $\Ssem$ is
also said to be \emph{cancellative}.

A semiring $\Ssem$ is said to be {\em left divisible} if any element
$x \in \Sset -\set{\zero}$ admits a left inverse $x' \in \Sset$, that
is $x' \otimes x = \one$. $\Ssem$ is said to be {\em weakly left
  divisible} if for any $x$ and $x'$ in $\Sset$ such that $x \oplus x'
\neq \zero$, there exists at least one $z$ such that $x = (x \oplus
x') \otimes z$. When the $\otimes$ operation is cancellative, $z$ is
unique and we can then write: $z = (x \oplus x')^{-1} \otimes x$.

\ignore{
When $z$ is not unique, we can still assume that we have an algorithm
to find one of the possible $z$s and denote it by $(x \oplus x')^{-1}
\otimes x$. Furthermore, we will assume that $z$ can be found in a
consistent way, that is: $ ((u \otimes x) \oplus (u \otimes x'))^{-1}
\otimes (u \otimes x) = (x \oplus x')^{-1} \otimes x$ for any $x, x',
u \in \Sset$ such that $u \neq \zero$. A semiring is {\em
  zero-sum-free} if for any $x$ and $x'$ in $\Sset$, $x \oplus x' =
\zero$ implies $x = x' = \zero$.
}

Weighted finite automata (WFAs) are automata in which the transitions
are labeled with weights in addition to the usual alphabet symbols
which are elements of a semiring \cite{Kuich1986\ignore{,Mohri2009}}.  A WFA $A
= (\Sigma, Q, I, F, E, \lambda, \rho)$ over $\Sset$ is a $7$-tuple
where: $\Sigma$ is the finite alphabet of the automaton, $Q$ is a
finite set of states, $I \subseteq Q$ the set of initial states, $F
\subseteq Q$ the set of final states, $E$ a finite multiset of
transitions which are elements of $Q \times \Sigma \times \Sset \times
Q$, $\lambda\colon I \rightarrow \Sset$ an initial weight function,
and $\rho\colon F \rightarrow \Sset$ the final weight function mapping
$F$ to $\Sset$.

A path $\pi$ of a WFA is an element of $E^*$ with
consecutive transitions. We denote by $\orig[\pi]$ the origin state
and by $\dest[\pi]$ the destination state of the path.
A path is said to be \emph{accepting} or \emph{successful}
when $\orig[\pi] \in I$ and $\dest[\pi] \in F$.

We denote by $w[e]$ the weight of a transition $e$ and similarly by
$w[\pi]$ the weight of path $\pi = e_1 \cdots e_n$ obtained by
$\otimes$-multiplying the weights of its constituent transitions:
$w[\pi] = w[e_1] \otimes \cdots \otimes w[e_n]$. When $\orig[\pi]$ is
in $I$, we denote by $\wI[\pi] = \lambda(\orig[\pi]) \otimes w[\pi]$
the weight of the path including the initial weight of the origin
state.  For any two subsets $U, V \subseteq Q$ and any string $x \in
\Sigma^*$, we denote by $P(U, x, V)$ the set of paths labeled with $x$
from a state in $U$ to a state in $V$ and by $W(U, x, V)$ the
$\oplus$-sum of their weights:
\begin{equation*}
W(U, x, V) = \bigoplus_{\pi \in P(U, x, V)} w[\pi].\\[-.2cm]
\end{equation*}
When $U$ is reduced to a singleton, $U = \set{p}$, we will simply
write $W(p, x, V)$ instead of $W(\set{p}, x, V)$ and similarly for
$V$. To include initial weights, we denote:
\begin{equation*}
\WI(x, V) = \bigoplus_{\pi \in P(I, x, V)} \wI[\pi]. \\[-.2cm]
\end{equation*}
We also denote by $\delta(U, x)$ the set of states reached by
paths starting in $U$ and labeled with $x \in \Sigma^*$.
The weight
associated by $A$ to a string $x \in \Sigma^*$ is defined by
\begin{equation}
A(x) = \bigoplus_{\pi \in P(I, x, F)} \wI[\pi] \otimes \rho(\dest[\pi]),
\end{equation}
when $P(I, x, F) \neq \emptyset$. $A(x)$ is defined to be $\zero$ when $P(I, x, F) = \emptyset$. 

A state $q$ of a WFA $A$ is said to be {\em accessible} if
$q$ can be reached by a path originating in $I$. It is
\emph{coaccessible} if a final state can be reached by a path from
$q$.  A WFA $A$ is \emph{trim} if all states of $A$ are
both accessible and coaccessible. $A$ is \emph{unambiguous} if any
string $x \in \Sigma^*$ labels at most one accepting path.

In all that follows, we will consider weighted automata over
a weakly left divisible cancellative semiring.\footnote{The algorithms
  we present can be straightforwardly extended to the case of weakly
  left divisible left semirings \cite{AllauzenMohri2003}.}

\section{\Predis\ of weighted automata}
\label{sec:predis}

\subsection{Relation $\R$  over $Q \times Q$}

Two states $q, q' \in Q$ are said to share a common future if
there exists a string $x \in \Sigma^*$ such that 
$P(q, x, F)$ and $P(q', x, F)$ are
not empty. Let $\R^*$ be the relation defined over $Q \times Q$ by $q
\, \R^* \, q'$ iff $q$ and $q'$ share a common future in $A$.
Clearly, $\R^*$ is
reflexive and symmetric, but in general it is not transitive. Observe
that $\R^*$ is \emph{compatible with the inverse transition function},
that is, if $q \, \R^* \, q'$, $q \in \delta(p, x)$ and $q' \in
\delta(p', x)$ for some $x \in \Sigma^*$ with $(p, p') \in Q^2$, then
$p \, \R^* \, p'$.  We will also denote by $\R_0$ the complete
relation defined by $q \, \R_0 \, q'$ for all $(q, q') \in
Q^2$. Clearly, $R_0$ is also compatible with the inverse transition
function.

The construction we will define holds for any relation $\R$\ignore{
  over $Q \times Q$} out of the set of admissible relations ${\cal R}$
defined as the relations over $Q \times Q$ that are compatible with
the inverse transition function and coarser than $\R^*$. Thus, ${\cal
  R}$ includes $\R^*$ and $\R_0$, as well as any relation $\R$
compatible with the inverse transition function that is coarser than
$\R^*$, that is, for all $(q, q') \in Q^2$, $q \, \R^* \, q' \implies
q \, \R \, q'$. Thus, for a relation $\R$ in ${\cal R}$, two states
$q$ and $q'$ that share the same future are necessarily in relation,
but they may also be in relation without sharing the same future. Note
in particular that $\R$ is always reflexive\ignore{ and symmetric}.

\subsection{Construction}

Fix a relation $\R \in {\cal R}$. For any $x \in \Sigma^*$, and $q \in
\delta(U, x)$, we also denote by $\delta_q(U, x)$ the set of states in
$\delta(U, x)$ that are in relation with $q$:
\begin{equation*}
\delta_q(U, x) =
\delta(U, x) \cap \set{p\colon p \, \R \, q}.
\end{equation*}
Note that, since $\R$ is reflexive, by definition, $\delta_q(I, x)$
contains $q$.  
For any $x \in \Sigma^*$ and $q \in \delta(I, x)$, we define the
weighted subset $s(x, q)$ by
\begin{align*}
s(x, q) = 
& \Big\{(p_1, w_1), \ldots, (p_t, w_t)\colon \big( \set{p_1, \ldots,
  p_t} = \delta_q(I, x) \big)\\
& \ \wedge \big( \forall i \in [1, t], w_i = \WI(x, \set{p_1, \ldots, p_t})^{-1} \otimes \WI(x,
p_i)\big) \Big\}.
\end{align*}
For a weighted subset $s$, define $\Set(s) = \set{p_1, \ldots, p_t}$.
For any automaton $A$ define $A' = (\Sigma, Q', I', F', E', \lambda', \rho')$ as
follows:
\begin{align*}
& Q' = \set{(q, s(x, q)) \colon x \in \Sigma^*, q \in \delta(I, x)}\\
& I' = \set{(q, s(\e, q)) \colon q \in I} \quad \text{and} \quad 
F' = \set{(q, s(x, q)) \colon x \in \Sigma^*, q \in \delta(I, x)
  \cap F}\\
& E' = \bigg\{ 
((q, s), a, w, (q', s')) \colon 
 (q, s), (q', s') \in Q', a \in \Sigma,\\[-.15cm]
& \mspace{60mu} \exists x \in \Sigma^*\mid & & \mspace{-460mu} s = s(x, q) = \set{(p_1, w_1), \ldots, (p_t, w_t)}, \\
& & & \mspace{-460mu} s'= s(xa, q') = \set{(p'_1, w'_1), \ldots, (p'_{t'}, w'_{t'})},\\[-.15cm]
& & & \mspace{-460mu}q' \in \delta(q, a), w = \bigoplus_{i = 1}^t \Big (w_i \otimes W(p_i, a, \Set(s')) \Big),\\[-.4cm]
& & & \mspace{-460mu} \forall j \in [1, t'], w'_j = w^{-1} \otimes \Big( \bigoplus_{i = 1}^{t} w_i \otimes W(p_i, a,
p'_j \Big)
\bigg\}\\
\end{align*}
\begin{flalign*}
\text{and} \quad
& \forall (q, s) \in I', s = \set{(p_1, w_1), \ldots, (p_t, w_t)}, \,
\lambda'((q, s)) = \bigoplus_{\substack{i \in [1, t]}} \lambda(p_i).\\
& \forall (q, s) \in F', s = \set{(p_1, w_1), \ldots, (p_t, w_t)}, \,
\rho'((q, s)) = \bigoplus_{\substack{p_i \in F\\ i \in [1, t]}}
(w_i \otimes \rho(p_i)).\\[-.8cm]
\end{flalign*}
Note that in definition of the transition set $E'$ above, the property
$\Set(s') = \delta_{q'}(\Set(s), a)$ always holds.  In particular, if
$p'$ is in $\delta_{q'}(\Set(s), a)$, then there is a path from $I$ to
some $p \in \Set(s)$ labeled $x$ and a transition from $p$ to $p'$
labeled with $a$ and $p'\, R \, q'$ so $p'$ is in $\Set(s')$.
Conversely, if $p'$ is in $\Set(s')$ then there exists $p$ reachable
by $x$ with a transition labeled with $a$ from $p$ to $p'$. Since $p'$
is in $\Set(s')$, $p'$ is in $\delta_{q'}(I, xa)$, thus $p' \, R \,
q'$. Since there exists a transition labeled with $a$ from $q$ to $q'$
and from $p$ to $p'$, this implies that $p \, R \, q$.  Since $p \, R
\, q$ and $p$ is reachable via $x$, $p$ is $\delta_q(I, x)$.

When the set of states $Q'$ is finite, $A'$ is a WFA
with a finite set of states and transitions and is defined as the
result of the \emph{\predis\ of $A$}. In general,
\predis\ is thus defined only for a subset of weighted
automata, which we will refer to as the set of
\emph{\predisable\ weighted automata}. We will show
later sufficient conditions for an automaton $A$ to be
\emph{\predisable} in the case of the tropical
semiring. Figure~\ref{fig:1}
illustrates the \predis\ construction.

\subsection{Properties of the resulting WFA}

In this section, we assume that the input WFA $A = (\Sigma, Q,
I, F, E, \lambda, \rho)$ is \predisable. In general, the
WFA $A'$ constructed by \predis\ is
not equivalent to $A$, but the weight of each path from an initial
state equals the $\oplus$-sum of the weights of all paths with the
same label in the input automaton starting at an initial state.

\begin{proposition}
\label{prop:1}
Let $A' = (\Sigma, Q', I', F', E', \lambda', \rho')$ be the finite
automaton returned by the \predis\ of the WFA $A = (\Sigma, Q, I, F,
E, \lambda, \rho)$. Then, the following equalities hold for any path
$\pi \in P(I', x, (q, s))$ in $A'$, with $x \in \Sigma^*$ and $s =
\set{(p_1, w_1), \ldots, (p_t, w_t)}$:
\begin{align*}
\wI[\pi] = \WI(x, \Set(s)) \quad \text{ and } \quad
\forall i \in [1, t], \ \wI[\pi] 
 \otimes w_i & = \WI(x, p_i).
\end{align*}

\end{proposition}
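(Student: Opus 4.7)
My plan is to prove both equalities simultaneously by induction on the length of the path $\pi$ in $A'$ (equivalently, on $|x|$).

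For the base case $|x| = 0$, the path $\pi$ is the trivial path at some initial state $(q, s(\e, q)) \in I'$, so $\wI[\pi] = \lambda'((q,s))$. By definition $\lambda'((q,s)) = \bigoplus_{i} \lambda(p_i)$ where $\{p_1, \ldots, p_t\} = \Set(s) = \delta_q(I, \e) \subseteq I$; this is exactly $\WI(\e, \Set(s))$. The second equality then follows directly from the defining identity $w_i = \WI(\e, \Set(s))^{-1} \otimes \WI(\e, p_i)$ together with the defining property of weak left divisibility.

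For the inductive step, I would write $\pi = \pi' \cdot e$ with $\pi' \in P(I', x, (q,s))$ and $e = ((q,s), a, w, (q',s'))$, apply the induction hypothesis to $\pi'$, and then unfold the definition of $w$ to get
\begin{equation*}
\wI[\pi] = \wI[\pi'] \otimes w
= \bigoplus_{i=1}^{t} (\wI[\pi'] \otimes w_i) \otimes W(p_i, a, \Set(s'))
= \bigoplus_{i=1}^{t} \WI(x, p_i) \otimes W(p_i, a, \Set(s')).
\end{equation*}
At this point the \textbf{key step} is to recognize that this last expression equals $\WI(xa, \Set(s'))$. This requires showing that every path from $I$ labeled with $xa$ and terminating in $\Set(s')$ factors through some $p_i \in \Set(s)$. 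This is exactly the set-equality observation $\Set(s') = \delta_{q'}(\Set(s), a)$ noted after the construction: any predecessor $p$ of a state $p'_j \in \Set(s') \subseteq \delta_{q'}(I, xa)$ satisfies $p'_j \, \R\, q'$, so by compatibility of $\R$ with the inverse transition function together with the $a$-transitions $q \to q'$ and $p \to p'_j$, we obtain $p \, \R \, q$ and therefore $p \in \delta_q(I, x) = \Set(s)$. Thus no $x$-labeled predecessor is missed, which gives $\WI(xa, \Set(s')) = \bigoplus_i \WI(x, p_i) \otimes W(p_i, a, \Set(s'))$ as needed.

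For the second equality in the inductive step, I would use the defining identity $w \otimes w'_j = \bigoplus_{i} w_i \otimes W(p_i, a, p'_j)$ (a consequence of the definition of $w'_j$ via weak left divisibility and cancellativity of $\otimes$) to compute
\begin{equation*}
\wI[\pi] \otimes w'_j = \wI[\pi'] \otimes (w \otimes w'_j)
= \bigoplus_{i=1}^{t} \WI(x, p_i) \otimes W(p_i, a, p'_j),
\end{equation*}
and then apply the same $\Set(s')$-argument pointwise at $p'_j$ to conclude that this equals $\WI(xa, p'_j)$.

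The only genuine obstacle is the set-equality step identifying predecessors, and that has essentially been done for us in the paragraph following the construction; everything else is a routine induction exploiting distributivity, cancellativity, and the defining property of $(\cdot)^{-1} \otimes (\cdot)$ in a weakly left divisible cancellative semiring.
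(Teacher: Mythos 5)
Your proof is correct and follows essentially the same route as the paper's: induction on the path length, with the base case handled via $\lambda'$ and the inductive step combining the defining identities for the transition weights $w$ and $w'_j$ with the key factoring claim that every path labeled $xa$ from $I$ into the new subset passes through the previous subset, established via the compatibility of $\R$ with the inverse transition function. No substantive differences or gaps beyond what the paper itself leaves implicit.
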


\begin{proposition}
\label{prop:2}
Let $A' = (\Sigma, Q', I', F', E', \lambda', \rho')$ 
be the finite automaton returned by
the \predis\ of the WFA $A = (\Sigma, Q, I,
F, E, \lambda, \rho)$.  Then, for any accepting 
path $\pi \in P(I', x, (q, s))$ in
$A'$, with $x \in \Sigma^*$ and 
$(q, s) \in F'$, the following equality holds:
\begin{equation*}
\wI[\pi] \otimes \rho'((q, s)) = A(x).
\end{equation*}

\end{proposition}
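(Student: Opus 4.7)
The plan is to reduce the identity to Proposition~\ref{prop:1} and then close a combinatorial gap: the multiset $\Set(s)$ must capture \emph{all} final states reachable from $I$ by $x$, not just those in relation with~$q$.

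First, I would unfold the definition of $\rho'$ and use distributivity of $\otimes$ over $\oplus$ to write
\begin{equation*}
\wI[\pi] \otimes \rho'((q, s))
= \wI[\pi] \otimes \bigoplus_{\substack{p_i \in F \\ i \in [1,t]}} w_i \otimes \rho(p_i)
= \bigoplus_{\substack{p_i \in F \\ i \in [1,t]}} \bigl(\wI[\pi] \otimes w_i\bigr) \otimes \rho(p_i).
\end{equation*}
By the second equality in Proposition~\ref{prop:1}, $\wI[\pi] \otimes w_i = \WI(x, p_i)$, so the right-hand side is $\bigoplus_{p \in \Set(s) \cap F} \WI(x, p) \otimes \rho(p)$.

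Second, I would show that $\Set(s) \cap F = \delta(I, x) \cap F$. The inclusion $\subseteq$ is immediate from $\Set(s) = \delta_q(I,x) \subseteq \delta(I,x)$. For $\supseteq$, fix any $p \in \delta(I,x) \cap F$. Since $(q, s) \in F'$, we have $q \in \delta(I,x) \cap F$ as well, so both $p$ and $q$ are final states reachable from $I$ by $x$. Taking the witness string $\e$, the path sets $P(q, \e, F)$ and $P(p, \e, F)$ are both non-empty, so $p \, \R^* \, q$. Because $\R \in {\cal R}$ is coarser than $\R^*$, we get $p \, \R \, q$, and hence $p \in \delta_q(I, x) = \Set(s)$.

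Third, combining the two computations:
\begin{equation*}
\wI[\pi] \otimes \rho'((q, s))
= \bigoplus_{p \in \delta(I,x) \cap F} \WI(x, p) \otimes \rho(p)
= \bigoplus_{p \in \delta(I,x) \cap F} \Bigl( \bigoplus_{\pi' \in P(I, x, p)} \wI[\pi'] \Bigr) \otimes \rho(p),
\end{equation*}
which, after distributing $\rho(p)$ and regrouping paths by their destination, is precisely $\bigoplus_{\pi' \in P(I, x, F)} \wI[\pi'] \otimes \rho(\dest[\pi']) = A(x)$.

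The only nontrivial step is the second one, identifying $\Set(s) \cap F$ with $\delta(I,x) \cap F$; everything else is bookkeeping with distributivity and the definitions. That step is what makes it essential that $\R$ be coarser than $\R^*$ (and not just any reflexive relation): without this, the final weight $\rho'((q,s))$ would miss contributions from other accepting paths on $x$ ending in final states unrelated to $q$, and the equality with $A(x)$ would fail.
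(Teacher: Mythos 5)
Your proposal is correct and follows essentially the same route as the paper: unfold $\rho'$, distribute, apply Proposition~\ref{prop:1} to replace $\wI[\pi] \otimes w_i$ by $\WI(x, p_i)$, and then observe that since $q$ is final every state of $\delta(I,x) \cap F$ shares a common future with $q$ (hence is $\R$-related to $q$ and lies in $\Set(s)$), so the sum ranges over exactly $\delta(I,x)\cap F$ and equals $A(x)$. Your explicit identification $\Set(s)\cap F = \delta(I,x)\cap F$ via the empty-string witness is just a slightly more detailed phrasing of the paper's argument.
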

\begin{proof}
Let $s = \set{(p_1, w_1), \ldots, (p_t,
  w_t)}$.
By definition of $\rho'$, we can write
\begin{align*}
\wI[\pi] \otimes \rho'((q, s)) 
& = \wI[\pi] \otimes \bigoplus_{\substack{p_i \in F\\ i \in [1, t]}}
(w_i \otimes \rho(p_i))
= \bigoplus_{\substack{p_i \in F\\ i \in [1, t]}}
(\wI[\pi] \otimes w_i \otimes \rho(p_i)).\\[-.25cm]
\end{align*}
Plugging in the expression of $(\wI[\pi] \otimes w_i)$ given by
Proposition~\ref{prop:1} yields
\begin{equation}
\label{eq:4}
\wI[\pi] \otimes \rho'((q, s)) 
= \bigoplus_{\substack{p_i \in F\\ i \in [1, t]}}
( \WI(x, p_i) \otimes \rho(p_i)). \\[-.25cm]
\end{equation}
By definition of \predis, $q$ is a final state. Any state $p \in \delta(I, x) \cap F$ shares a common future 
with $q$ since both $p$ and $q$ are final states, thus we must have $p \,
R \, q$, which implies
$p \in \Set(s)$. Thus, the $\oplus$-sum in \eqref{eq:4} is exactly
over the set of states $\delta(I, x) \cap F$, which proves that
$\wI[\pi] \otimes \rho'((q, s)) = A(x)$.\qed
\end{proof}

\begin{proposition}
\label{prop:3}
Let $A' = (\Sigma, Q', I', F', E', \lambda' \rho')$ be the finite 
automaton returned by
the \predis\ of the WFA $A = (\Sigma, Q, I,
F, E, \lambda, \rho)$.  Then, any string $x \in \Sigma^*$ accepted by $A$ is
accepted by $A'$.
\end{proposition}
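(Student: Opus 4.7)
The plan is to lift any accepting path of $A$ to an accepting path of $A'$ with the same label. Suppose $x$ is accepted by $A$ via an accepting path $\pi = e_1 \cdots e_n$ visiting states $q_0 \in I, q_1, \ldots, q_n \in F$, where $e_i$ is labeled by $a_i$ so that $x = a_1 \cdots a_n$. Write $x_i = a_1 \cdots a_i$, with $x_0 = \e$. I would show that the sequence of pairs $(q_0, s(x_0, q_0)), (q_1, s(x_1, q_1)), \ldots, (q_n, s(x_n, q_n))$ forms the state sequence of an accepting path of $A'$ labeled by $x$.

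The easy parts are membership in $Q'$, $I'$, and $F'$. Since $q_i \in \delta(I, x_i)$ by construction, $(q_i, s(x_i, q_i)) \in Q'$ by definition. Moreover, $(q_0, s(\e, q_0)) \in I'$ as $q_0 \in I$, and $(q_n, s(x, q_n)) \in F'$ as $q_n \in \delta(I, x) \cap F$. It remains to exhibit, for each $i \in [1,n]$, a transition in $E'$ from $(q_{i-1}, s(x_{i-1}, q_{i-1}))$ to $(q_i, s(x_i, q_i))$ labeled $a_i$.

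For this, I would instantiate the definition of $E'$ with the witness $x := x_{i-1}$ and $a := a_i$. The edge $e_i$ provides $q_i \in \delta(q_{i-1}, a_i)$, and the remark following the definition of $E'$ yields $\Set(s(x_i, q_i)) = \delta_{q_i}(\Set(s(x_{i-1}, q_{i-1})), a_i)$ from the compatibility of $\R$ with the inverse transition function. Combined with Proposition~\ref{prop:1}, this equality lets me check that the weights $w$ and $w'_j$ prescribed by the $E'$-formulas coincide with the weights recorded in $s(x_i, q_i)$: the key observation is that every path from $I$ to a state of $\Set(s(x_i, q_i))$ labeled $x_i$ must pass through a state of $\Set(s(x_{i-1}, q_{i-1}))$ at step $i-1$, again by compatibility of $\R$.

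The main obstacle is precisely this weight-consistency check; it depends on the compatibility of $\R$ with the inverse transition function, on Proposition~\ref{prop:1}, and on the fact that $w$ is non-zero so $w^{-1}$ is defined (this non-vanishing follows from the existence of $\pi$, which contributes a non-zero summand to $W(p_{i-1}, a_i, \Set(s(x_i, q_i)))$, together with cancellativity). Once these are in place, concatenating the transitions produces the desired accepting path in $A'$ labeled $x$.
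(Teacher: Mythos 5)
Your proof is correct and takes essentially the same route as the paper's: the paper also lifts the accepting path $(q_0,a_1,q_1)\cdots(q_{n-1},a_n,q_n)$ of $A$ to the sequence of states $(q_i, s(a_1\cdots a_i, q_i))$ of $A'$ and concludes by the definition of finality, simply asserting ``by construction'' that this is a path in $A'$. Your additional checks (membership in $Q'$, $I'$, $F'$, instantiating the definition of $E'$ with witness $x_{i-1}$, and the weight-consistency argument via compatibility of $\R$ with the inverse transition function) just make explicit what the paper leaves implicit.
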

\begin{proof}
  Let $(q_0, a_1,
  q_1) \cdots (q_{n - 1}, a_n, q_n)$ be an accepting path in $A$ with
  $a_1 \cdots a_n = x$.  By construction, 
$((q_0, s_0), a_1,
  (s_1, q_1)) \cdots ((s_{n - 1}, q_{n - 1}), a_n, (s_n, q_n))$
is a path in $A'$ with $s_i = s(a_1 \cdots a_i, q_i)$ for
all $i \in [1, n]$
and $s_0 = \e$ and by definition of finality in \predis, 
$(s_n, q_n)$ is final. Thus, $x$ is accepted by $A'$. \qed
\end{proof}

\begin{figure}[t]
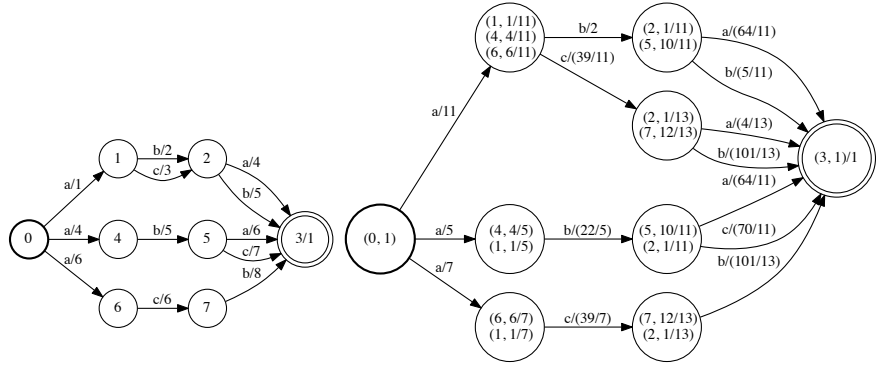

\centering
\begin{tabular}{cc}
\raisebox{.5cm}{\ipsfig{.4}{figure=t1}} & \ipsfig{.4}{figure=t3} 
\end{tabular}
\caption{Illustration of the \predis\ construction in the semiring
  $(\Rset_+, +, \times, 0, 1)$. For each state $(q, s)$ of the result,
  the subset $s$ is explicitly shown. $q$ is the state of the first
  pair in $s$ shown. The weights are rational numbers, for example
  $1/11 = \frac{1}{11} \approx .091$.}
\label{fig:1}
\end{figure}

\section{Disambiguation algorithm}
\label{sec:dis}

Propositions~\ref{prop:1}-\ref{prop:3} show that the strings accepted
by $A'$ are exactly those accepted by $A$ and that the weight of any
path in $A'$ accepting $x \in \Sigma^*$ is $A(x)$. Thus, if for any
$x$, we could eliminate from $A'$ all but one of the paths labeled
with $x$, the resulting WFA would be unambiguous and equivalent to
$A$. Removing transitions to achieve this objective without changing
the function represented by the WFA turns out not to be
straightforward. The following two lemmas (Lemmas \ref{lemma:1} and
\ref{lemma:11}) and their proofs are the critical technical
ingredients helping us define the transition removal and prove its
correctness. This first lemma provides a useful tool for the proof of
the second.

\begin{lemma}
\label{lemma:1}
  Let $A' = (\Sigma, Q', I', F', E', \lambda', \rho')$ be 
  the finite automaton returned
  by the \predis\ of the WFA $A = (\Sigma, Q, I, F, E,
  \lambda, \rho)$. Let $(q, s)$ and $(q', s')$ be two distinct states of $A'$
  both admitting a transition labeled with $a \in \Sigma$ to the same
  state $(q_0, s_0)$ (or both final states), and such that $(q, s) \in
  \delta(I', x)$ and $(q', s') \in \delta(I', x)$ for some $x \in
  \Sigma^*$. Then, if $(q, s) \in \delta(I', x')$ for some $x' \neq
  x$, $x' \in \Sigma^*$, there exists a state $(q', s'') \in
  \delta(I', x')$ with $(q', s'') \neq (q, s)$ and such that $(q',
  s'')$ admits a transition labeled with $a$ to $(q_0, s_0)$ (resp. is
  a final state).
\end{lemma}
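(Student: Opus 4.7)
The key idea is to show that $q'$ itself already lies in $\Set(s)$, so $q'$ must be reachable by $x'$ in $A$, which then lets us lift a path in $A$ to the required state in $A'$.

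First I would argue that $q \, \R^* \, q'$. In the transition case, both $q$ and $q'$ map to $q_0$ on $a$ in $A$ (by the construction of $E'$, a transition $((q,s),a,w,(q_0,s_0))$ requires $q_0 \in \delta(q,a)$, and similarly for $q'$). Since $q_0$ is reachable from $I'$, it is reachable in $A$ (using Proposition~\ref{prop:1}, or directly from the construction) and hence coaccessible, so $q$ and $q'$ share a common future through $q_0$. In the final case, $q$ and $q'$ are both final, so they share the future $\e$. Either way $q \, \R^* \, q'$, and because $\R$ is coarser than $\R^*$, both $q\, \R \, q'$ and $q' \, \R \, q$.

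Next I would show $q' \in \delta(I, x')$. Since $(q,s) \in \delta(I', x)$, the path witnessing this gives $q \in \delta(I, x)$ and $s = s(x, q)$, so $\Set(s) = \delta_q(I, x)$. The assumption $(q,s) \in \delta(I', x')$ similarly yields $s = s(x', q)$, hence $\Set(s) = \delta_q(I, x')$ as well; in particular $\delta_q(I, x) = \delta_q(I, x')$. Now $q'\in \delta(I, x)$ (from $(q', s') \in \delta(I', x)$) together with $q' \, \R \, q$ places $q'$ in $\delta_q(I, x) = \Set(s)$. Applying the other equality gives $q' \in \delta_q(I, x') \subseteq \delta(I, x')$.

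Now I would build the desired state $(q', s'')$. Define $s'' = s(x', q')$; this is well-defined since $q' \in \delta(I, x')$. Given any path in $A$ from $I$ to $q'$ labeled $x'$, the same argument as in the proof of Proposition~\ref{prop:3} lifts it to a path in $A'$ ending at $(q', s'')$, so $(q', s'') \in \delta(I', x')$. Since $q \neq q'$ (otherwise $s = s(x,q) = s(x,q') = s'$, contradicting $(q,s) \neq (q',s')$), we have $(q', s'') \neq (q, s)$.

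Finally I would verify the required outgoing behaviour of $(q', s'')$. In the transition case, $q_0 \in \delta(q', a)$ (from the hypothesis), so by the construction $(q', s'')$ admits an $a$-transition to $(q_0, s(x'a, q_0))$; but the $(q,s)\xrightarrow{a}(q_0,s_0)$ transition together with $(q,s)\in\delta(I',x')$ forces $(q_0, s_0) \in \delta(I', x'a)$, i.e.\ $s_0 = s(x'a, q_0)$, matching the target. In the final case, $q'\in F$ and $q'\in\delta(I,x')$ puts $(q',s'')$ in $F'$ directly. The main obstacle is really the bookkeeping in the third step: recognising that the equality $s(x, q) = s(x', q)$ is enough to migrate $q'$ from being reachable by $x$ to being reachable by $x'$, which is exactly where compatibility of $\R$ with the inverse transition function plays its role.
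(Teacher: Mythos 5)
Your proof is correct and takes essentially the same route as the paper's: derive $q \neq q'$ and $q \, \R \, q'$ from the shared $a$-transition to $q_0$ (resp.\ shared finality), conclude $q' \in \Set(s)$ and hence that $q'$ is reachable by $x'$ in $A$, and then exhibit $(q', s(x', q'))$ with its $a$-transition to $(q_0, s(x'a, q_0)) = (q_0, s_0)$ (resp.\ its finality). Your added bookkeeping (the invariant $s = s(x', q)$ for states reached by $x'$, and the Proposition~\ref{prop:3}-style lifting) merely makes explicit what the paper leaves implicit; just note that inferring a common future through $q_0$ tacitly assumes $q_0$ is coaccessible (i.e.\ $A$ trim), a point the paper glosses over in the same way.
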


Let $A' = (\Sigma, Q', I', F', E', \lambda', \rho')$ be the 
finite automaton returned by
the \predis\ of the WFA $A = (\Sigma, Q, I, F, E, 
\lambda, \rho)$.  For
any state $(q_0, s_0)$ of $A'$ and label $a \in \Sigma$, let $\l(q_0,
s_0, a) = ((q_1, s_1), \ldots, (q_n, s_n))$, $n \geq 1$, be the list
of all distinct states of $A'$ admitting a transition labeled with $a
\in \Sigma$ to $(q_0, s_0)$, with $q_1 \leq \cdots \leq q_n$.  We
define the \emph{processing} of the list $\l(q_0, s_0, a)$ as follows:
the states of the list are processed in order; for each state $(q_j,
s_j)$, $j \geq 1$, this consists of removing its $a$-transition to
$(q_0, s_0)$ if and only if there exists a co-reachable state $(q_i,
s_i)$ with $i < j$ whose $a$-transition to $(q_0, s_0)$ has not been
removed.\footnote{This condition can in fact be relaxed: it suffices
  that there exists a co-reachable state $(q_i, s_i)$ with $i <
  j$ since it can be shown that in that case, there exists necessarily
  such a state with a $a$-transition to $(q_0, s_0)$. } Note that, by
definition, the $a$-transition to $(q_0, s_0)$ of the first state
$(q_1, s_1)$ is kept.

We define in a similar way the processing of the list $\f = ((q_1,
s_1), \ldots, (q_n, s_n))$, $n \geq 1$, of all distinct final states
of $A'$, with $q_1 \leq \cdots \leq q_n$ as follows: the states of the
list are processed in order; for each state $(q_j, s_j)$, $j \geq 1$,
this consists of making it non-final if and only if there exists a
co-reachable state $(q_i, s_i)$ with $i < j$ whose finality has been
maintained. By definition, the finality of state $(q_1, s_1)$ is
maintained.

\begin{lemma}
\label{lemma:11}
Let $A' = (\Sigma, Q', I', F', E', \lambda', \rho')$ be 
the finite automaton returned by
the \predis\ of the WFA 
$A = (\Sigma, Q, I, F, E, \lambda, \rho)$.  Let
$(q_0, s_0)$ be a state of $A'$ and $a \in \Sigma$, then, the
automaton $A''$ resulting from processing the list $\l(q_0, s_0, a)$
accepts the same strings as $A'$. Similarly, the processing of the
list of final states $\f$ of $A'$ does not affect the set of strings
accepted by $A'$.
\end{lemma}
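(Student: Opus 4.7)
The plan is to establish the non-trivial inclusion that every string accepted by $A'$ is also accepted by $A''$, for each of the two assertions; the reverse inclusion is immediate because $A''$ is obtained from $A'$ only by removing transitions (resp.\ demoting final states). The heart of the argument will be the following reachability-preservation sub-lemma, proved using Lemma~\ref{lemma:1}: for every $u \in \Sigma^*$, if some state of $\delta(I', u)$ admits an $a$-transition to $(q_0, s_0)$, then some state of $\delta(I', u)$ admits such a transition which was not removed by the processing of $\l(q_0, s_0, a)$; and analogously with \emph{admits an $a$-transition to $(q_0, s_0)$} replaced by \emph{is final}.

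To prove the sub-lemma I would let $S(u)$ be the intersection in question and let $(q^*, s^*) \in S(u)$ have smallest position in $\l(q_0, s_0, a)$. If its transition had been removed, there would exist a strictly earlier co-reachable witness $(q^+, s^+)$ with preserved transition, sharing some reaching string $y$ with $(q^*, s^*)$. If $y = u$, then $(q^+, s^+) \in S(u)$ itself contradicts minimality. Otherwise, Lemma~\ref{lemma:1} applied with $(q, s) = (q^*, s^*)$, $(q', s') = (q^+, s^+)$, $x = y$, $x' = u$ yields a state $(q^+, \tilde{s}) \in \delta(I', u)$ admitting an $a$-transition to $(q_0, s_0)$; since its first coordinate $q^+$ precedes $q^*$ in the sort order of $\l(q_0, s_0, a)$, this state too lies at a strictly smaller position, again contradicting minimality. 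The finality variant is entirely parallel.

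Given the sub-lemma, I would finish the proof as follows. Let $x$ be a string accepted by $A'$ and choose an accepting path $\pi$ in $A'$ minimizing the number of removed transitions used. If that number is zero, $\pi$ is a path in $A''$ and we are done. Otherwise, let $e_k = ((q_j, s_j), a, w, (q_0, s_0))$ be the first removed transition in $\pi$, and set $u = a_1 \cdots a_{k-1}$. The prefix of $\pi$ up to $e_{k-1}$ is an $A''$-path reaching $(q_j, s_j) \in \delta(I', u)$, so the sub-lemma supplies a state $(q', s') \in \delta(I', u)$ whose $a$-transition to $(q_0, s_0)$ is preserved; applying the sub-lemma iteratively to shorter prefixes (induction on $|u|$) I would produce an $A''$-path from $I'$ to $(q', s')$ spelling $u$, and splicing this with the preserved transition and the unchanged suffix $e_{k+1} \cdots e_n$ would give an accepting path for $x$ with strictly fewer removed transitions than $\pi$, contradicting its choice. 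The finality case follows by the same template, applied at the last transition of each accepting path of $A'$.

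The main obstacle I anticipate is the subcase of the sub-lemma in which the witness $(q^+, s^+)$ shares its first coordinate with $(q^*, s^*)$; strict position decrease then depends on the tie-breaking convention underlying the enumeration of $\l(q_0, s_0, a)$ and on checking that the state produced by Lemma~\ref{lemma:1} respects it.
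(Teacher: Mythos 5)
Your proposal is correct and, at bottom, it is the paper's own argument repackaged: both rest on Lemma~\ref{lemma:1} together with a well-founded descent along the list $\l(q_0, s_0, a)$, which is sorted by head states $q_1 \leq \cdots \leq q_n$. Where the paper inducts over the processing order and, for a hypothetically lost string $x'$, builds an explicit strictly decreasing sequence of indices $k_1 > k_2 > \cdots$ to contradict finiteness of the list, you pick the member of $\delta(I', u)$ with an $a$-transition to $(q_0, s_0)$ of minimal position and contradict minimality in one step; this is the same descent. Your splicing step --- the induction on $|u|$ needed to turn $A'$-reachability of the surviving state into $A''$-reachability --- does go through, because within a single list every removed transition carries the label $a$ and enters the one state $(q_0, s_0)$, so you can cut any path at its last removed transition and apply the induction hypothesis to the strictly shorter prefix; the paper is actually less explicit on this point, arguing only at the level of the accepted language.

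The obstacle you flag at the end is not a real one. If $(q^+, s^+) \neq (q^*, s^*)$ are both in $\delta(I', y)$ for the same string $y$, then necessarily $q^+ \neq q^*$: any state $(q, s)$ reached from $I'$ by $y$ satisfies $s = s(y, q)$, since the subset component of the target of a transition is determined by the source subset, the label and the head state, so equal head states would force the two states to coincide. This is precisely the observation that opens the paper's proof of Lemma~\ref{lemma:1}, and it reappears in the paper's proof of the present lemma as the remark that two distinct states of the form $(q_j, s_j)$ and $(q_j, s_{i+1})$ cannot be co-reachable. Hence $q^+ < q^*$ strictly, and the state $(q^+, \tilde{s})$ supplied by Lemma~\ref{lemma:1} precedes every state with head $q^*$ in the list, whatever tie-breaking is used among equal head states; minimality is contradicted without any extra convention.
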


Assume that $A$ is \predisable. Then, this helps us define a 
disambiguation algorithm \Dis\ for $A$ defined as follows:
\begin{enumerate}

\item construct $A'$, the result of the \predis\ of $A$;

\item for any state $(q_0, s_0)$ of $A'$ and label $a \in \Sigma$, 
  process $\l(q_0, s_0, a)$; process the list of final states $\f$.

\end{enumerate}

\begin{theorem}
  Let $A = (\Sigma, Q, I, F, E, \lambda, \rho)$ be a \predisable\ weighted
  automaton. Then, algorithm \Dis\ run on input $A$ generates
  an unambiguous WFA $B$ equivalent to $A$.
\end{theorem}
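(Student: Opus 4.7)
The plan is to combine Propositions~\ref{prop:1}--\ref{prop:3} and Lemma~\ref{lemma:11} to establish equivalence, and then to exploit the greedy structure of the list processing (together with Lemma~\ref{lemma:1}) for unambiguity. Since $B$ is obtained from $A'$ only by removing transitions and cancelling the finality of certain states, every accepting path of $B$ is also an accepting path of $A'$ and therefore, by Proposition~\ref{prop:2}, carries weight exactly $A(x)$ on its label $x$. Applying Lemma~\ref{lemma:11} once per processed list (for each pair $((q_0, s_0), a)$, and once for $\f$), the set of accepted strings is preserved through the whole transformation from $A'$ to $B$. Combined with Proposition~\ref{prop:3} and the converse observation that, by Proposition~\ref{prop:2}, any string accepted by $A'$ is accepted by $A$ (the weight of any accepting path of $A'$ equals $A(x)$), this shows that $A$ and $B$ accept the same set of strings.

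The main technical step is proving $B$ unambiguous. I would argue by contradiction: suppose $\pi$ and $\pi'$ are two distinct accepting paths in $B$ both labeled by $x = a_1 \cdots a_n$, with state sequences $(q_0, s_0), \ldots, (q_n, s_n)$ and $(q'_0, s'_0), \ldots, (q'_n, s'_n)$. Let $k$ be the largest index at which the two sequences differ. If $k = n$, then $(q_n, s_n)$ and $(q'_n, s'_n)$ are distinct final states of $B$, both reachable from $I'$, contradicting the fact that processing the list $\f$ un-finalizes all but the first co-reachable final state. If $k < n$, then $(q_{k+1}, s_{k+1}) = (q'_{k+1}, s'_{k+1})$ and both $(q_k, s_k)$ and $(q'_k, s'_k)$ carry an $a_{k+1}$-transition to this common successor in $B$. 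The suffix of $\pi$ from $(q_{k+1}, s_{k+1})$ lies entirely in $B$, witnessing that $(q_k, s_k)$ is co-reachable at every stage of the algorithm, and in particular when $\l((q_{k+1}, s_{k+1}), a_{k+1})$ is processed. Hence, when the later of $(q_k, s_k), (q'_k, s'_k)$ in the list ordering is examined, the earlier one is still co-reachable with an undeleted $a_{k+1}$-transition, forcing the later $a_{k+1}$-transition to be removed---contradicting its presence in $B$.

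Once unambiguity is established, equivalence of weights is immediate: for every $x$ accepted by $A$, $B$ contains a unique accepting path labeled $x$, whose weight equals $A(x)$ by the first paragraph, so $B(x) = A(x)$; for every $x$ not accepted by $A$, the first paragraph shows $B$ does not accept $x$ either, so $B(x) = \zero = A(x)$. I expect the main obstacle in executing this plan to be the book-keeping of co-reachability across the interleaved list removals: one must argue carefully that the transitions witnessing co-reachability of $(q_k, s_k)$ have not been deleted by the time the relevant list is processed. The key observation making this go through is that these witnessing transitions all lie in the final automaton $B$, and hence have not been removed at any earlier stage; Lemma~\ref{lemma:1}, though not invoked directly here, underlies Lemma~\ref{lemma:11} and is what makes the greedy removal language-preserving in the first place.
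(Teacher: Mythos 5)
Your proposal is correct and follows essentially the same route as the paper: Lemma~\ref{lemma:11} together with Propositions~\ref{prop:1}--\ref{prop:3} gives preservation of the accepted strings and of the weight $A(x)$ on each surviving accepting path, and unambiguity comes from the greedy list-processing rule. The only difference is that you spell out in detail the unambiguity argument (last-divergence-point contradiction, using that transitions and finalities present in $B$ were never removed and hence witness co-reachability during processing), which the paper dispatches with ``by definition of the processing operations''; your elaboration is sound.
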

\begin{proof}
  Let $A' = (\Sigma, Q', I', F', E', \lambda', \rho')$ be the WFA
  returned by \predis\ run with input $A$. By lemma~\ref{lemma:11},
  the set of strings accepted after processing the lists $\l(q_0, s_0,
  a)$ and $\f$ remains the same\footnote{
The lemma is stated as processing one list, but from 
the proof it is clear it applies to multiple lists.}.
Furthermore, in view of the
  Propositions~\ref{prop:1}-\ref{prop:3}, the weight of the unique
  path labeled with an accepted string $x$ in $B$ $\otimes$-multiplied
  by its final weight is exactly $A(x)$. Finally, by definition of the
  processing operations, the resulting WFA is
  unambiguous, thus $B$ is an unambiguous WFA
  equivalent to $A$. \qed
\end{proof}
Differing numberings of the states can lead to different orderings in
each list and thus to different transition or finality removals,
thereby resulting in different weighted automata, with potentially
different sizes after trimming. Nevertheless, all such resulting
weighted automata are equivalent.

\section{Sufficient conditions}
\label{sec:sufficient}

The definition of siblings and that of twins property for weighted
automata were previously given by
\cite{choffrut1978,Mohri1997,AllauzenMohri2003}. We will use a weaker
(sufficient) condition for \predisability.

\begin{definition}
Two states $p$ and $q$ of a WFA $A$ are said to be
\emph{siblings} if there exist two strings $x, y \in \Sigma^*$ such
that both $p$ and $q$ can be reached from an initial state by paths
labeled with $x$ and there are cycles at both $p$ and $q$ labeled with
$y$.

Two sibling states $p$ and $q$ are said to be \emph{twins} if for any
such $x$ and $y$, $W(p, y, p) = W(q, y, q)$.  $A$ is said to have the
\emph{twins property} when any two siblings are twins. It is said to
have the $\R$-\emph{weak twins property} when any two siblings that
are in $\R$ relation are twins. When $A$ admits the $\R^*$-\emph{weak
  twins property}, we will also say in short that it admits the
\emph{weak twins property}.
\end{definition}

The results given in the remainder of this section are presented in
the specific case of the tropical semiring. To show the following
theorem we partly use a proof technique from \cite{Mohri1997} for showing
that the twins property is a sufficient condition for weighted
determinizability.

\begin{theorem}
\label{th:weak_twins}
  Let $A$ be a WFA over the tropical semiring that admits
  the $\R$-\emph{weak twins property}. Then, $A$ is \predisable.
\end{theorem}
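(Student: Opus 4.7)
The plan is to proceed by contradiction, adapting the pigeonhole-plus-twins technique of \cite{Mohri1997} to the \predis\ setting. Assume $A$ is not \predisable, so the set of states $Q'$ of the pre-disambiguation is infinite. Since $Q$ is finite, only finitely many subsets of $Q$ can appear as $\Set(s(x, q))$, so by pigeonhole there exist a fixed state $q \in Q$ and a fixed subset $P = \set{p_1, \ldots, p_t} \subseteq Q$ such that infinitely many distinct weight vectors $(w_1, \ldots, w_t)$ arise for strings $x$ with $\delta_q(I, x) = P$. In the tropical semiring each $w_i = \WI(x, p_i) - \min_k \WI(x, p_k)$ is a non-negative real with at least one zero entry, so infinitely many distinct such vectors yield a sequence $x_n$ along which some component grows unboundedly, forcing $|x_n| \to \infty$.

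For each $x_n$, I fix a shortest path $\pi_n^{(i)} \colon I \to p_i$ labeled $x_n$ for each $p_i \in P$, and consider the $t$-tuple of states visited simultaneously along these paths at each position. Since there are at most $|Q|^t$ such tuples, for $|x_n|$ sufficiently large pigeonhole produces two positions $k_1 < k_2$ at which the entire tuple repeats. Letting $r_n^{(i)}$ denote the state on $\pi_n^{(i)}$ at position $k_1$, $y$ the substring of $x_n$ from position $k_1+1$ to $k_2$, and $v$ the remaining suffix of $x_n$, each $r_n^{(i)}$ then admits a cycle labeled $y$ along $\pi_n^{(i)}$, whose weight realizes $W(r_n^{(i)}, y, r_n^{(i)})$ by the shortest-path property of $\pi_n^{(i)}$.

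The crux of the argument, and the main obstacle, is to justify applying the $\R$-weak twins property here: the siblings $r_n^{(i)}$ and $r_n^{(j)}$ need to be in $\R$-relation. By reflexivity of $\R$, $q \in P$, so $q = p_{i_0}$ for some index $i_0$. The suffix $v$ takes $r_n^{(i)}$ to $p_i$ along $\pi_n^{(i)}$ and $r_n^{(i_0)}$ to $q$ along $\pi_n^{(i_0)}$. Since $p_i \R q$ by the very definition of $P = \delta_q(I, x_n)$, and since compatibility of $\R$ with the inverse transition function lifts by an easy induction on $|v|$ to whole strings, I would conclude $r_n^{(i)} \R r_n^{(i_0)}$. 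The $\R$-weak twins property then yields $W(r_n^{(i)}, y, r_n^{(i)}) = W(r_n^{(i_0)}, y, r_n^{(i_0)}) = c_n$, a common value independent of $i$.

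Iterating this joint-pigeonhole argument decomposes each $\pi_n^{(i)}$ into a joint-simple remainder of common length at most $|Q|^t$ plus a sequence of aligned cycle-sets, each cycle-set having the same weight across all $i$ by the previous step. Consequently $\WI(x_n, p_i) - \WI(x_n, p_j)$ reduces to the difference of the weights of the two joint-simple remainders, and so takes values in the finite set of such differences realized by joint paths of length at most $|Q|^t$ in $A$. The components $w_i$ therefore take only finitely many values, contradicting their unbounded growth along $x_n$ and establishing that $A$ is \predisable.
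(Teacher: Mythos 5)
Your proposal follows essentially the same route as the paper's proof: assume infinitely many states are created, pigeonhole on the finitely many pairs $(q, \Set(s))$ to obtain a fixed support $\set{p_1, \ldots, p_t}$ carrying infinitely many distinct residual-weight vectors, and then shorten long witness strings by deleting synchronized cycles whose weights coincide because (i) cycles that are subpaths of ${\cal I}$-shortest paths realize $W(r, y, r)$, (ii) compatibility of $\R$ with the inverse transition function applied to the common suffix yields the $\R$-relation between the cycle states, and (iii) the $\R$-weak twins property then equates the cycle weights, so that the residual differences lie in the finite set of weight differences of equal-labeled paths of bounded length. The one structural deviation is that you pigeonhole on the $t$-tuple of states of all $t$ paths simultaneously (bound $|Q|^t$), whereas the paper first isolates a single pair $(q, r)$ for which $\set{w_n(q) - w_n(r)}$ is infinite and works with synchronized cycles in $A \cap A$, which gives the sharper bound $|Q|^2$ and lighter bookkeeping; both variants are sound. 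One intermediate claim of yours is not justified as stated: infinitely many distinct vectors in $\Rset_{\geq 0}^t$ need not have a component growing unboundedly (with real transition weights, infinitely many distinct residual values can remain bounded), so ``unbounded growth'' is an overclaim at a point where the twins property has not yet been used. It is also unnecessary: infiniteness of the set of distinct vectors already forces the lengths $|x_n|$ to be unbounded (there are only finitely many strings of any bounded length, hence finitely many vectors), and the final contradiction should be drawn, as in the paper, between the infiniteness of the distinct residual vectors and the finiteness of the set of values they can take, rather than against unboundedness; with that rewording your argument goes through and matches the paper's.
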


The theorem implies in particular that if $A$ has the twins property
then $A$ is \predisable.  In particular, any acyclic weighted
automaton is \predisable.

A WFA $A$ is said to be \emph{determinizable} when the
weighted determinization algorithm of \cite{Mohri1997} terminates with input
$A$ (see also \cite{AllauzenMohri2003}). In that case, the output of the algorithm
is a deterministic automaton equivalent to $A$.

\begin{theorem}
\label{th:det_predis}
  Let $A$ be a determinizable WFA over the tropical
  semiring, then $A$ is \predisable.
\end{theorem}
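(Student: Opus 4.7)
The plan is to deduce this theorem from the preceding Theorem~\ref{th:weak_twins} by showing that any determinizable WFA over the tropical semiring satisfies the full twins property, which is strictly stronger than the $\R$-weak twins property for every admissible $\R \in {\cal R}$.

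As a preliminary step I would trim $A$; this is harmless since removing non-accessible or non-coaccessible states affects neither the termination of the weighted determinization algorithm nor the \predis\ construction, both of which operate only on the accessible portion of the input. So we may assume $A$ is trim.

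Next I would argue that a trim determinizable WFA $A$ over the tropical semiring has the twins property. This is the standard characterization of \cite{Mohri1997,AllauzenMohri2003}. For completeness, the argument goes as follows: suppose $A$ has two siblings $p \neq q$ reached from an initial state by paths labelled with a common $x$ and carrying $y$-labelled cycles of distinct weights $w_p = W(p, y, p)$ and $w_q = W(q, y, q)$. Each state of the determinization of $A$ is a weighted subset of $Q$ in which the component states carry residual weights. After reading $xy^k$, the determinization state contains both $p$ and $q$ with a residual-weight difference equal to $k \cdot (w_p - w_q)$ (subtraction being the $\otimes$-division in the tropical semiring). Since $w_p \neq w_q$, these differences take infinitely many distinct values as $k$ ranges over $\Nset$, so the determinization algorithm must construct infinitely many distinct states, contradicting the determinizability of $A$. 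Thus every pair of siblings of $A$ must satisfy $W(p, y, p) = W(q, y, q)$.

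Finally I would invoke Theorem~\ref{th:weak_twins}. Since every pair of siblings of $A$ is a pair of twins, in particular every pair of siblings that also lies in the relation $\R$ is a pair of twins, so $A$ has the $\R$-weak twins property and is therefore \predisable. The only non-routine point in this plan is the direction ``determinizable implies twins property''; once it is granted, the remainder is a one-line appeal to the previous theorem.
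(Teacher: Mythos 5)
There is a genuine gap: the pivotal claim that every determinizable WFA over the tropical semiring has the twins property is false, and the references you invoke do not support it. \cite{Mohri1997,AllauzenMohri2003} prove that the twins property is \emph{sufficient} for determinizability in general, and \emph{necessary} only for trim \emph{unambiguous} (Theorem~12 of \cite{Mohri1997}) or cycle-unambiguous/polynomially ambiguous automata; this is precisely why the paper, in the discussion of Figure~\ref{fig:9}, first passes to an unambiguous automaton $B$ before applying Theorem~12. For a general (exponentially ambiguous) automaton your residual-weight computation breaks down: the residual of $p$ in the subset reached by $xy^k$ is $\WI(xy^k,p)-\min_{p'}\WI(xy^k,p')$, and $\WI(xy^k,p)$ is a minimum over \emph{all} paths labeled $xy^k$, not only those that stay on the $y$-cycle at $p$; other paths (for instance ones that switch between the two siblings) can keep the difference bounded even when the cycle weights differ. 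Concretely, take initial state $0$, final state $3$, transitions $0\xrightarrow{a/0}1$, $0\xrightarrow{a/0}2$, self-loops $1\xrightarrow{b/1}1$ and $2\xrightarrow{b/2}2$, a transition $1\xrightarrow{b/0}2$, and $1\xrightarrow{c/0}3$, $2\xrightarrow{c/0}3$. Here $1$ and $2$ are siblings with $b$-cycle weights $1\neq 2$, so the twins property fails; moreover $1$ and $2$ share the common future $c$, so even the $\R^*$-weak twins property fails. Yet determinization terminates (after $ab$ the subset stabilizes at $\set{(1,1),(2,0)}$), so the automaton is determinizable. This shows not only that your lemma is false, but that no weakening of it to the $\R$-weak twins property can rescue the route through Theorem~\ref{th:weak_twins}: determinizability simply does not imply any form of the (weak) twins property.

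The paper's proof avoids the twins property altogether and argues directly. Assuming \predis\ creates infinitely many states, the argument from the proof of Theorem~\ref{th:weak_twins} produces two states $q,r$ and strings $(x_n)_{n\in\Nset}$ such that the set $\set{\WI(x_n,q)-\WI(x_n,r)\colon n\in\Nset}$ is infinite. But in the weighted determinization of $A$, the subset $S_n$ reached by $x_n$ contains $q$ and $r$ with residual weights whose difference is exactly $\WI(x_n,q)-\WI(x_n,r)$; since $A$ is determinizable there are only finitely many distinct subsets $S_n$, hence only finitely many such differences, a contradiction. If you want to keep your two-step structure, you would have to replace the twins property by this residual-weight comparison; as it stands, the first step of your plan is not salvageable.
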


By the results of \cite{kirsten08}, this also implies that any
polynomially ambiguous WFA that has the \emph{clones property} is
\predisable\ and can be disambiguated using \Dis. There are however
weighted automata that are \predisable\ and thus can be disambiguated
using \Dis but that cannot be determinized using the algorithm of
\cite{Mohri1997}.  Figure~\ref{fig:9} gives an example of such a WFA.
To see that the WFA $A$ of Figure~\ref{fig:9} cannot be determinized,
consider instead $B$ obtained from $A$ by removing the transition from
state $3$ to $5$. $B$ is unambiguous and does not admit the twins
property (cycles at states $1$ and $2$ have distinct weights), thus it
is not determinizable by theorem 12 of \cite{Mohri1997}. Weighted determinization
creates infinitely many subsets of the form $\set{(1, 0), (2, n)}$, $n
\in \Nset$, for paths from the initial state labeled with
$ab^n$. Precisely the same subets are created when applying
determinization to $A$.

\begin{figure}[t]
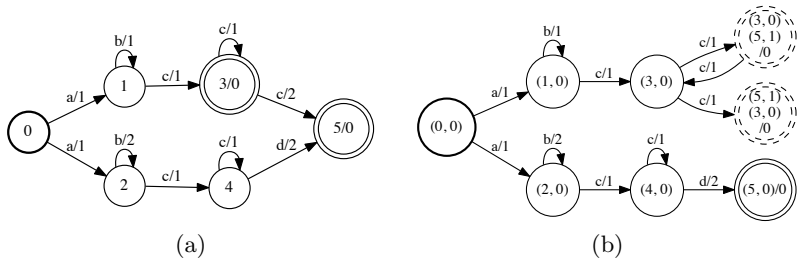

\centering
\begin{tabular}{c@{\hspace{.5cm}}c}
\raisebox{.15cm}{\ipsfig{.43}{figure=ex9}} & 
\ipsfig{.4}{figure=ex9_dis}\\
(a) & (b)
\end{tabular}
\caption{(a) Weighted automaton $A$ that cannot be determinized by the
  weighted determinization algorithm of \cite{Mohri1997}. (b) $A$ has
  the weak twins property and can be disambiguated by \Dis as shown by
  the figure.  One of the two states in dashed style is not made final
  by the algorithm. The head state for each of these states, is the
  state appearing in the first pair listed.}
\label{fig:9}
\end{figure}

The following result can be proven in a way that is similar to the
proof of the analogous result for the twins property given by
\cite{AllauzenMohri2003}.\footnote{In \cite{AllauzenMohri2003}, the
  authors use instead the terminology of \emph{cycle-unambiguous}
  weighted automata, which coincides with that of polynomially ambiguous
  weighted automata\ignore{ \cite{AllauzenMohriRastogi2011}}.}

\begin{theorem}
\label{th:4}
Let $A$ be a trim polynomially ambiguous WFA over the
tropical semiring. Then, $A$ has the weak twins property iff the
weight of any cycle in $B = \text{\sc Trim}(A \cap (-A))$ is $0$.
\end{theorem}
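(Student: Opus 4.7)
The plan is to reduce both directions of the equivalence to the standard description of $B = \text{\sc Trim}(A \cap (-A))$ and then appeal to polynomial ambiguity to make cycle weights unique. In the tropical semiring, a path in $A \cap (-A)$ from $(i_1, i_2)$ to $(p, q)$ labeled $x$ is in one-to-one correspondence with a pair $(\pi_1, \pi_2)$ of paths in $A$ labeled $x$, from $i_1$ to $p$ and from $i_2$ to $q$, carrying weight $w(\pi_1) - w(\pi_2)$. Trimming then keeps the pairs $(p, q)$ that are both accessible and coaccessible in $A \cap (-A)$. Accessibility of $(p, q)$ is equivalent to $p$ and $q$ being \emph{siblings} (some common string $x$ reaches both from $I$), and coaccessibility is equivalent to $p \, \R^* \, q$ (some common string $x'$ takes both to $F$). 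Thus the states of $B$ are precisely the sibling pairs that are in $\R^*$-relation.

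A cycle at $(p, q)$ in $B$ labeled $y$ corresponds to a pair of cycles $\pi_p$ at $p$ and $\pi_q$ at $q$ in $A$, both labeled $y$, with weight $w(\pi_p) - w(\pi_q)$; its intermediate states belong to $B$ since they are accessible through $(p, q)$ and coaccessible through the remainder of the cycle. The polynomial-ambiguity assumption enters here: by the Weber--Seidl characterization of polynomial ambiguity, no state of $A$ carries two distinct cycles labeled by the same word, so $\pi_p$ and $\pi_q$ are the unique such cycles and $W(p, y, p) = w(\pi_p)$, $W(q, y, q) = w(\pi_q)$. With this setup, both directions of the theorem follow. For the forward direction, if $A$ has the weak twins property then any cycle in $B$ exhibits a sibling pair $p \, \R^* \, q$ with a common cycle label $y$, so $W(p, y, p) = W(q, y, q)$ and the cycle has weight $0$. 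For the converse, given siblings $p \, \R^* \, q$ witnessing the weak twins condition with a common cycle label $y$, the pair $(p, q)$ is a state of $B$, the unique cycles $\pi_p$ and $\pi_q$ combine into a cycle at $(p, q)$ that lies entirely in $B$, and the hypothesis forces $w(\pi_p) = w(\pi_q)$, i.e., $W(p, y, p) = W(q, y, q)$.

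The main obstacle is this polynomial-ambiguity step. Without uniqueness of cycles, a cycle in $B$ at $(p, q)$ labeled $y$ would only correspond to \emph{some} pair of cycles at $p$ and $q$ labeled $y$, not necessarily to the $\min$-weight ones, and a zero cycle weight in $B$ would not suffice to recover $W(p, y, p) = W(q, y, q)$. This is precisely the difficulty handled in \cite{AllauzenMohri2003} for the analogous twins-property result, and the argument carries over once one restricts attention to pairs $(p, q)$ that are in $\R^*$-relation, a restriction built into $B$ by trimming. The remaining verifications (that the intersection construction correctly factorizes cycles into their two coordinates and that trimming preserves cycles issued from states of $B$) are routine and do not require any new idea beyond those already present in the cited analogue.
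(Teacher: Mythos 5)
Your proof is correct and follows essentially the route the paper itself indicates: it adapts the Allauzen--Mohri twins-property characterization, using polynomial ambiguity (cycle-unambiguity of trim automata) to identify each projected cycle weight with $W(p,y,p)$ and $W(q,y,q)$, and using coaccessibility in $\text{\sc Trim}(A \cap (-A))$ to encode exactly the $\R^*$ (common-future) restriction that distinguishes the weak twins property from the twins property. The only nit is terminological: a state $(p,q)$ of $B$ is a pair reachable by a common string and sharing a common future, which makes $p$ and $q$ siblings only once a common cycle label $y$ is exhibited, and your actual uses of this fact (in both directions) are consistent with that.
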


This leads to an algorithm for testing the weak twins property for
polynomially ambiguous automata in time $O(|Q|^2 + |E|^2)$. It was
recently shown that the twins property is a decidable property that is
PSPACE-complete for WFAs over the tropical semiring
\cite{Kirsten2012}. It would be interesting to determine if the weak
twins property we just introduced is also decidable.

\section{Experiments}
\label{sec:experiments}

In order to experiment with weighted disambiguation, we implemented
the algorithm in the {\em OpenFst} C++ library
\cite{AllauzenRileySchalkwykSkutMohri2007}. For comparison, an
implementation of weighted determinization is also available in that
library \cite{Mohri1997}.

For a first test corpus, we generated 500 speech {\em lattices} drawn from a
randomized, anonymized utterance sampling of voice searches on the
Google Android platform
\cite{SchalkwykBeefermanBeaufaysByrneChelbaCohenKamvarStrope2010}.
Each lattice is a weighted acyclic automaton over spoken words that
contains many weighted paths. Each path represents a hypothesis of
what was uttered along with the automatic speech recognizer's (ASR)
estimate of the probability of that path.  Such lattices are useful
for passing compact hypothesis sets to subsequent processing without
commitment to, say, just one solution at the current stage.

The size of a lattice is determined by a probability threshold with
respect to the most likely estimated path in the lattice; hypotheses
within the threshold are retained in the lattice.  Using $|A| = |Q| +
|E|$ to measure automata size, the mean size for these lattices was
2384 and the standard deviation was 3241.

The ASR lattices are typically non-deterministic and ambiguous due to
both the models and the decoding strategies used. Determinization can
be applied to reduce redundant computation in subsequent stages;
disambiguation can be applied to determine the combined probability
estimate of a string that may be distributed among several otherwise
identically-labels paths.

\ignore{
Figure~\ref{fig:disdet} shows the expansion in automata size (size of
result divided by size of input) with both disambiguation and
determinization for the 500 sample lattices.  }Disambiguation has a
mean expansion of 1.23 and a standard deviation of 0.59.
Determinization has a mean expansion of 1.31 and a standard deviation
of 1.35.  For this data, disambiguation has a slightly less mean
expansion compared to determinization but a very substantially less
standard deviation.

\ignore{
\begin{center}
\begin{figure}[t]
\begin{tabular}{cc}
\includegraphics[scale=0.33]{figs/disamb} &
\includegraphics[scale=0.33]{figs/det}\\
(a) & (b)\\[-.2cm]
\end{tabular}
\caption{Expansion in automata size with weighted (a) disambiguation
  and (b) weighted determinization; $|A| = |Q| + |E|$. Data are
  $500$ speech {\em lattices} from a voice search task.}
\label{fig:disdet}
\end{figure}
\vskip -1.0cm
\end{center}
}

As a second test corpus, we used 100 automata that are the compact
representation of hypothesized Chinese-to-English translations from
the DARPA Gale task \cite{iglesias2011}. These automata may contain
cycles due to details of the particular translation system,
which provides an interesting contrast to the acyclic speech case.
Some fail to determinize within the allotted memory (1GB)
and about two-thirds of those also fail to disambiguate, possible
when cycles are present.

Considering only those which are both determinizable and disambiguable,
disambiguation has a
mean expansion of 4.53 and a standard deviation of 6.0.
Determinization has a mean expansion of 54.5 and a standard deviation
of 90.5.  For this data, disambiguation has a much smaller mean
and standard deviation of expansion compared to determinization.

As a final example, Figure~\ref{fig:t4} shows an acyclic unambiguous
(unweighted) automaton whose size is in $O(n^2)$. No equivalent
deterministic automaton can have less than $2^n$ states since such an
automaton must have a distinct state for each of the prefixes of the
strings $\set{(a + b)^{k - 1} b (a + b)^{n - k} \colon 1 \leq k \leq
  n}$, which are prefixes of $L$. Thus, while our disambiguation
algorithm leaves the automaton of Figure~\ref{fig:t4} unchanged,
determinization would result in this case in an automaton with more
than $2^n$ states.

\begin{figure}[t]
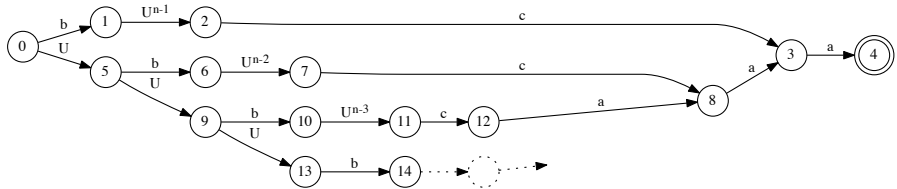

\centering
\vspace{-1cm}
\ipsfig{.4}{figure=t4} 
\vspace{-1cm}
\caption{Unambiguous automaton over the alphabet $\set{a, b, c}$
  accepting the language $L = \set{(a + b)^{k - 1} b (a + b)^{n - k} c
    a^k \colon 1 \leq k \leq n}$. For any $k \geq 0$, $U^k$ serves as
  a shorthand for $(a + b)^k$.}
\label{fig:t4}
\end{figure}

\section{Conclusion}

We presented the first algorithm for the disambiguation of WFAs.  The
algorithm applies to a family of WFAs verifying a sufficient condition
that we describe, which includes all acyclic WFAs. Our experiments
showed the favorable properties of this algorithm in applications
related to speech recognition and machine translation. The algorithm is likely to admit a
large number of applications in areas such as natural language
processing, speech processing, computational biology, and many other
areas where WFAs are commonly used. The study of the theoretical
properties we initiated raises a number of novel theoretical questions
which include the following: the decidability of the weak twins
property for arbitrary WFAs, the characterization of WFAs that admit
an equivalent unambiguous WFA, the characterization of WFAs to which
our algorithm can apply and perhaps an extension of our algorithm to a
wider domain, and finally the proof and study of these questions for
other semirings than the tropical semiring.

\ignore{

\section*{Acknowledgments}

We thank Cyril Allauzen for discussions.
This work was partly funded by the NSF award IIS-1117591.

}

\bibliographystyle{abbrv}
\bibliography{wdis}

\begin{thebibliography}{10}

\bibitem{AlbertKari2009}
J.~Albert and J.~Kari.
\newblock Digital image compression.
\newblock In {\em Handbook of weighted automata}. Springer, 2009.

\bibitem{AllauzenBensonChelbaRileySchalkwyk2012}
C.~Allauzen, E.~Benson, C.~Chelba, M.~Riley, and J.~Schalkwyk.
\newblock Voice query refinement.
\newblock In {\em Interspeech}, 2012.

\bibitem{AllauzenMohri2003}
C.~Allauzen and M.~Mohri.
\newblock Efficient algorithms for testing the twins property.
\newblock {\em Journal of Automata, Languages and Combinatorics},
  8(2):117--144, 2003.

\bibitem{AllauzenRileySchalkwykSkutMohri2007}
C.~Allauzen, M.~Riley, J.~Schalkwyk, W.~Skut, and M.~Mohri.
\newblock {OpenFst Library}.
\newblock {\em http://www.openfst.org}, 2007.

\bibitem{Bruel2008}
T.~M. Breuel.
\newblock The {OCR}opus open source {OCR} system.
\newblock In {\em Proceedings of IS\&T/SPIE 20th Annual Symposium}, 2008.

\bibitem{choffrut1978}
C.~Choffrut.
\newblock {\em Contributions \`{a} l'\'{e}tude de quelques familles
  remarquables de fonctions rationnelles}.
\newblock PhD thesis, Universit\'{e} Paris 7, LITP: Paris, France, 1978.

\bibitem{Durbin1998}
R.~Durbin, S.~R. Eddy, A.~Krogh, and G.~J. Mitchison.
\newblock {\em Biological Sequence Analysis: Probabilistic Models of Proteins
  and Nucleic Acids}.
\newblock Camb. Univ. Press, 1998.

\bibitem{Eilenberg1974}
S.~Eilenberg.
\newblock {\em Automata, Languages and Machines}.
\newblock Academic Press, 1974.

\bibitem{Eppstein1998}
D.~Eppstein.
\newblock {Finding the $k$ shortest paths}.
\newblock {\em SIAM J. Comp.}, 28(2):652--673, 1998.

\bibitem{iglesias2011}
G.~Iglesias, C.~Allauzen, W.~Byrne, A.~de~Gispert, and M.~Riley.
\newblock Hierarchical phrase-based translation representations.
\newblock In {\em Proceedings of EMNLP}, pages 1373--1383, 2011.

\bibitem{KaplanKay1994}
R.~M. Kaplan and M.~Kay.
\newblock Regular models of phonological rule systems.
\newblock {\em Computational Linguistics}, 20(3), 1994.

\bibitem{kirsten08}
D.~Kirsten.
\newblock A {B}urnside approach to the termination of {M}ohri's algorithm for
  polynomially ambiguous min-plus-automata.
\newblock {\em ITA}, 42(3):553--581, 2008.

\bibitem{Kirsten2012}
D.~Kirsten.
\newblock Decidability, undecidability, and pspace-completeness of the twins
  property in the tropical semiring.
\newblock {\em Theoretical Computer Science}, 420:56--63, 2012.

\bibitem{Kuich1986}
W.~Kuich and A.~Salomaa.
\newblock {\em Semirings, Automata, Languages}.
\newblock Number~5 in {EATCS} Monographs on Theoretical Computer Science.
  Springer, Berlin, Germany, 1986.

\bibitem{Mohri1997}
M.~Mohri.
\newblock Finite-state transducers in language and speech processing.
\newblock {\em Computational Linguistics}, 23(2):269--311, 1997.

\bibitem{MohriPereiraRiley2008}
M.~Mohri, F.~C.~N. Pereira, and M.~Riley.
\newblock Speech recognition with weighted finite-state transducers.
\newblock In {\em Handbook on speech proc.\ and speech comm.} Springer, 2008.

\bibitem{MohriRiley2002}
M.~Mohri and M.~Riley.
\newblock An efficient algorithm for the n-best-strings problem.
\newblock In {\em Interspeech}, 2002.

\bibitem{SchalkwykBeefermanBeaufaysByrneChelbaCohenKamvarStrope2010}
J.~Schalkwyk, D.~Beeferman, F.~Beaufays, B.~Byrne, C.~Chelba, M.~Cohen,
  M.~Kamvar, and B.~Strope.
\newblock “your word is my command”: Google search by voice: A case study.
\newblock In {\em Advances in Speech Recognition}, pages 61--90. Springer,
  2010.

\bibitem{Schmidt1978}
E.~M. Schmidt.
\newblock {\em Succinctness of Description of Context-Free, Regular and
  Unambiguous Languages}.
\newblock PhD thesis, Dept. of Comp. Sci., University of Aarhus, 1978.

\end{thebibliography}

\newpage
\appendix

\section*{Appendix}
\vskip -.38cm
\section{Proof of Proposition \ref{prop:1}}

\begin{proof}
  The proof is by induction on the length of $\pi$. If $\pi$ has
  length zero, it is a zero-length path from the state $(q,
  s(\e, q)) \in I'$ to the same state and
  $\wI[\pi] = \lambda'((q, s))$.
  We have $\WI(\e, \Set(s)) = \bigoplus_{\pi \in P(I, \e, \Set(s))} \wI[\pi] =
  \bigoplus_{p \in \Set(s)} \lambda(p) = \lambda'((q, s))$. Also, for
  all $i \in [1, t]$,
  $\wI[\pi] \otimes w_i = \lambda'((q,s)) \otimes \left [
  \lambda'((q, s))^{-1} \otimes \lambda(p_i) \right ] = \lambda(p_i)$ and
  $\WI(\e, p_i) = \lambda(p_i)$, thus the equalities trivially
  hold.

  Assume that the equalities hold for all paths of length at most $n
  \in \Nset$ starting in $I'$ and let $\pi$ be a path of length $n +
  1$.  We can therefore decompose $\pi$ as a path in $P(I, x, (q',
  s'))$ for some $x \in \Sigma^*$, $q \in Q$, and $s'= \set{(p'_1,
    w'_1), \ldots, (p'_{t'}, w'_{t'})} \in Q'$, followed by a transition
  $e = ((q', s'), a, w[e], (q, s))$ from $(q', s')$ to $(q, s)$. By
  definition of $w[e]$ in \predis\, we
can write
\begin{align}
\wI[\pi] 
 = \wI[\pi']  \otimes w[e] \nonumber
& = \wI[\pi'] \otimes \bigoplus_{j =
    1}^{t'} \Big (w'_j \otimes W(p'_j, a, \Set(s)) \Big) \nonumber\\[-.4cm]
& = \bigoplus_{j =
    1}^{t'} \wI[\pi'] \otimes  w'_j \otimes W(p'_j, a, \Set(s)) \nonumber\\[-.4cm]
\label{eq:1}
& = \bigoplus_{j =
    1}^{t'} \WI(x, p'_j) \otimes W(p'_j, a, \Set(s)),
\end{align}
where we used the identities $\wI[\pi'] \otimes w'_j = \WI(x, p'_j)$,
$j \in [1, t']$, which hold by the induction hypothesis.

We will show that any path $\xi$ in $A$ labeled with $xa$, starting in
$I$ and ending in $\Set(s)$ must go through $\Set(s')$, that is, $\xi$
can be decomposed into a path labeled with $x$ and reaching a state of
$\Set(s')$ followed by a transition labeled with $a$ from that state
to a state of $\Set(s)$. \eqref{eq:1} then implies that
\begin{equation}
\label{eq:12}
\wI[\pi] = \WI(xa, \Set(s)).
\end{equation}
Indeed, let $\xi = \xi' e'$ be a decomposition of $\xi$ into a path
$\xi'$ labeled with $x$ from $I$ to some state $p' \in Q$ followed by
a transition $e'$ labeled with $a$ from $p'$ to some state $p \in
\Set(s)$. By definition of \predis\, since $p$ is in $\Set(s)$, we
have $p \, \R \, q$.  By the compatibility of $\R$ with the inverse
transition function, $p \in \delta(p', a)$, and $q \in \delta(q', a)$,
this implies $p' \, \R \, q'$. Since we also have $p' \in \delta(I,
x)$, this shows that $p'$ is in $\Set(s')$ and therefore that path
$\xi'$ ends in $\Set(s')$.

In view of $\wI[\pi] = \wI[\pi'] \otimes w[e]$ and using the
definition of $w_i$ in \predis\, we can write, for any $i \in [1, t]$,
\begin{align}
\wI[\pi] \otimes w_i 
& = \wI[\pi'] \otimes w[e] \otimes w[e]^{-1} \otimes \Big( \bigoplus_{j = 1}^{t'} w'_j \otimes W(p'_j, a,
p_i) \Big) \nonumber\\[-.4cm]
& = \wI[\pi'] \otimes \Big( \bigoplus_{j = 1}^{t'} w'_j \otimes W(p'_j, a,
p_i) \Big)  \nonumber\\[-.4cm]
& = \bigoplus_{j = 1}^{t'} \wI[\pi'] \otimes w'_j \otimes W(p'_j, a,
p_i) \nonumber\\[-.4cm]
\label{eq:3}
& = \bigoplus_{j = 1}^{t'} \WI(x, p'_j) \otimes W(p'_j, a,
p_i),
\end{align}
using the identities $\wI[\pi'] \otimes w'_j = \WI(x, p'_j)$,
$j \in [1, t']$, which hold by the induction hypothesis.

By the same argument as the one already presented, a path $\xi$
starting in $I$ labeled with $xa$ and ending in $p_i$ must reach a
state of $\Set(s')$ after reading $x$. In view of that, \eqref{eq:3}
implies that
\begin{equation}
\wI[\pi] \otimes w_i = \WI(xa, p_i),
\end{equation}
which concludes the proof.\qed
\end{proof}

\begin{figure}[t]
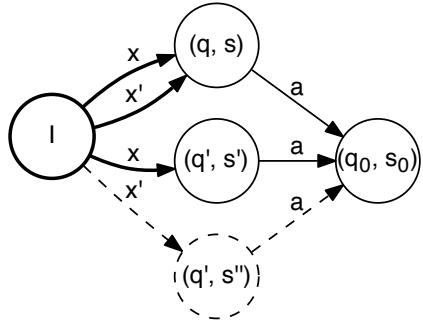

\centering
\vspace{-1cm}
\ipsfig{.66}{figure=t5} 
\caption{Illustration of the proof of Lemma~\ref{lemma:1}. The lemma
  proves the existence of the dashed path and state for $(q, s) \neq
  (q', s')$ and $x \neq x'$.} 
\label{fig:t5}
\end{figure}

\section{Proof of Lemma~\ref{lemma:1}}

\begin{proof}
  First, note that since $s = s(q, x)$ and $s' = s(q', x)$, $q = q'$
  implies $(q, s) = (q', s')$. By contraposition, since $(q, s) \neq
  (q', s')$, we must have $q \neq q'$.  Since both $q_0 \in
  \delta(q, a)$ and $q_0 \in \delta(q', a)$ in $A$ (or both $q$ and
  $q'$ are final states), $q$ and $q'$ share a common future, which
  implies $q \, \R \, q'$. Since $(q', s')$ is reachable by $x$ in
  $A'$ from $I'$, $q'$ must be reachable by $x$ from $I$ in $A$. This,
  combined with $q \, \R \, q'$, implies that $q'$ must be in
  $\Set(s)$. Since $(q, s) \in \delta(I', x')$, all states in
  $\Set(s)$ must be reachable by $x'$ from $I$ in $A$, in particular
  $q'$. Thus, by definition of the \predis\ construction, $A'$ admits
  a state $(q', s(q', x'))$, which is distinct from $(q, s)$ since $q
  \neq q'$. If $(q, s)$ admits a transition labeled with $a$ to $(q_0,
  s_0)$, then we have $s_0 = s(q_0, x'a)$.  If $(q', s')$ also admits
  a transition labeled with $a$ to $(q_0, s_0)$, then $q'$ admits a
  transition labeled with $a$ to $q_0$ and by definition of the
  \predis\ construction, $(q', s(q', x'))$ must admit a transition by
  $a$ to $(q_0, s(q_0, x'a)) = (q_0, s_0)$. Finally, in the case where
  both $(q, s)$ and $(q', s')$ are final states, then $q'$ is final in
  $A$ and thus $(q', s(q', x'))$ is a final state in $A'$.\qed
\end{proof}

\section{Proof of Lemma~\ref{lemma:11}}

\begin{proof}
  Fix $a \in \Sigma$ and let $\l(q_0, s_0, a) = ((q_1, s_1), \ldots,
  (q_n, s_n))$, $n \geq 1$, be the list of all distinct states of $A'$
  admitting a transition labeled with $a \in \Sigma$ to $(q_0, s_0)$,
  with $q_1 \leq \cdots \leq q_n$. By definition, the $a$-transition
  of the first state $(q_1, s_1)$ is kept, thus the set of strings
  accepted is unchanged after processing the first state. Assume now
  that the set of strings accepted is the same as that of $A'$ after
  processing all states $(q_1, s_1), \ldots, (q_i, s_i)$, $i \in [1, n
  - 1]$.  Assume that after processing $(q_{i + 1}, s_{i + 1})$ its
  $a$-transition to $(q_0, s_0)$ is removed, otherwise the set of strings
  accepted is clearly unchanged and is thus the same as $A'$ by the
  induction hypothesis. The removal occurs because $(q_{i + 1}, s_{i +
    1})$ and some state $(q_j, s_j)$ are both in $\delta(I', x)$ for
  some $x \in \Sigma^*$, with $j < i + 1$.  The removal of the
  transition could potentially cause the elimination of a string
  accepted by the automaton because $(q_{i + 1}, s_{i + 1})$ may be
  reachable by some other string $x' \neq x$ that does not reach
  $(q_j, s_j)$. Assume that $(q_{i + 1}, s_{i + 1})$ is reachable by
  such a string $x' \neq x$. We will show that at least one previously
  processed state is reachable by $x'$ whose  $a$-transition to
  $(q_0, s_0)$ has not been removed. This will prove that the set of
  strings accepted is not affected by the processing of $(q_{i + 1},
  s_{i + 1})$.

  Assume that no such previously processed state exists.  By
  Lemma~\ref{lemma:1}, there exists a state $(q_{k_1}, s_{k_1})$ in
  $\l(q_0, s_0, a)$ reachable by $x'$, distinct from $(q_{i + 1}, s_{i
    + 1})$ and with $q_{k_1} = q_j$. State $(q_{k_1}, s_{k_1})$ must
  have been processed before $(q_{i + 1}, s_{i + 1})$, otherwise, $j <
  i + 1 < k_1$ and $q_{k_1} = q_j$ would imply $q_j = q_{i + 1}$,
  which cannot be since, by construction, two distinct states of $A'$
  of the form $(q_j, s_j)$ and $(q_j, s_{i + 1})$ cannot be
  co-reachable.  Thus, since by assumption no previously processed
  state admitting a $a$-transition to $(q_0, s_0)$ is reachable by
  $x'$, the $a$-transition from $(q_{k_1}, s_{k_1})$ to $(q_0, s_0)$
  must have been removed. By the same assumption, the removal of the
  $a$-transition from $(q_{k_1}, s_{k_1})$ must be because $(q_{k_1},
  s_{k_1}) \in \delta(I', x'')$ and $(q_l, s_l) \in \delta(I', x'')$
  for some string $x'' \neq x'$ and some $l < k_1$, and because the
  $a$-transition of $(q_l, s_l)$ to $(q_0, s_0)$ has not been
  removed. By Lemma~\ref{lemma:1}, this implies the existence of a
  state $(q_{k_2}, s_{k_2})$ in $\l(q_0, s_0, a)$ reachable by $x'$,
  with $q_{k_2} = q_l$ and and with $(q_{k_2}, s_{k_2})$ distinct from
  $(q_{k_1}, s_{k_1})$. As argued before, this implies that $(q_{k_2},
  s_{k_2})$ has been processed before $(q_{k_1}, s_{k_1})$, therefore
  we have $k_2 < k_1$. Since $(q_{k_2}, s_{k_2})$ is reachable by
  $x'$, by assumption, its $a$-transition to $(q_0, s_0)$ must have
  been removed. Proceeding in this way, we can construct an infinite
  sequence of strictly decreasing indices $k_1 > k_2 > \ldots > k_m >
  \cdots$ of states $(q_{k_m}, s_{k_m})$ in $\l(q_0, s_0, a)$
  reachable by $x'$, which would contradict the finiteness of $\l(q_0,
  s_0, a)$.  Thus, there exists a previously processed state in
  $\l(q_0, s_0, a)$ whose $a$-transition has not been removed and that
  is reachable by $x'$, which concludes the proof of the first
  claim. The proof for processing $\f$ follows the same
  steps.\footnote{We can also introduce a \emph{super-final state} $f$
    to which all final states of $A'$ are connected by a transition
    labeled with an auxiliary symbol $\phi \not \in \Sigma$ with the
    semantics of finality. The proof is then syntactically the same as
    for regular symbols.}\qed
\end{proof}

\section{Proof of Theorem~\ref{th:weak_twins}}

\begin{proof}
  Assume that $A$ admits the $\R$-\emph{weak twins property} and that
  the \predis\ construction creates infinitely many distinct states
  $(q, s)$. Since the states defining the weighted subsets $s$ are of
  a finite number, there must be infinitely many states $(q_n, s_n)$,
  $n \!\in\! \Nset$, with the same $\sset(s_n)$. Among these states,
  we must have $q_n = q$ for at least one state $q$ for infinitely
  many $n \in \Nset$, since the number of distinct states $q_n$ is
  finite. Thus, the assumption made implies that the \predis\ construction
  creates an infinite sequence $(q, s_n)$, $n \!\in\!  \Nset$, with
  the same $\sset(s_n) = \set{p_1, \ldots, p_k}$, $k < +\infty$, and
  say $p_1 = q$. Thus, we can write $s_n = \set{(p_1, w_n(p_1)),
    \ldots, (p_k, w_n(p_k))}$.

  By Proposition~\ref{prop:1}, for any $n \in \Nset$, there exists a
  string $x_n \!\in\! \Sigma^*$, with
  \begin{equation}
    \label{eq:weights}
    \forall p \in \set{p_1, \ldots, p_k}, \quad w_n(p) = \WI(x_n, p) - \WI(x_n, \set{p_1, \ldots, p_k}).
  \end{equation}
  There exists at least one $p \in \set{p_1, \ldots, p_k}$ such that
  $\WI(x_n, \set{p_1, \ldots, p_k}) = \WI(x_n, p)$ for infinitely
  many indices $J \subseteq \Nset$, since $k$ is finite. By
  \eqref{eq:weights}, $w_n(p) = 0$ for all $n \in J$. $\set{w_n(q) -
    w_n(p_i) \colon n \in J}$ cannot be finite for all $i \in [1, k]$,
  otherwise in particular $\set{w_n(q) - w_n(p) \colon n \in J} =
  \set{w_n(q) \colon n \in J}$ would be finite, which in turn, by the
  finiteness of $\set{w_n(q) - w_n(p_i) \colon n \in J}$ for all $i$,
  would imply the finiteness $\set{w_n(p_i) \colon n \in J}$ for all $i$,
  contradicting the infiniteness of $\set{s_n\colon n \in J}$. Thus,
  there must exist at least one state $r \in \set{p_1, \ldots, p_k}$
  such that $\set{w_n(q) - w_n(r) \colon n \in J}$ is infinite.

We will show that $\set{w_n(q) - w_n(r) \colon n \in J}$ is
included in the finite set
\begin{equation}
A = \set{\wI[\pi_1] - \wI[\pi_0]\colon \pi_1 \in P(I, x, q), \pi_0 \in P(I, x,
  r), |x| \leq |Q|^2 - 1},
\end{equation}
thereby contradicting the original assumption about \predis\ creating
infinitely many states.

Refer to a shortest path with an origin at an initial state and that
includes the initial state's weight as an \emph{{\cal
    I}-shortest-path}.  Consider $x = x_n$ for some $n \in \Nset$.
Let $\pi_1$ be an {\cal I}-shortest-path among $P(I, x, q)$ and $\pi_0$ an
{\cal I}-shortest-path among $P(I, x, r)$. Thus, by \eqref{eq:weights}, we
can write
\begin{equation}
  w_n = (\wI[\pi_1] - \WI(x, p)) -  (\wI[\pi_0] - \WI(x, p)) = \wI[\pi_1] -  \wI[\pi_0].
\end{equation}
Since both $q$ and $r$ are reachable from $I$ by a path labeled with
$x$, there is a path in $A \cap A$ from a pair of initial states to
$(q, r)$. Assume that $|x| > |Q|^2 - 1$, then this path must go
through at least one non-empty cycle at some state $(q_1, r_1)$. Thus,
by definition of intersection, paths $\pi_0$ and $\pi_1$ can be
decomposed as
\begin{align*}
\pi_1 = \pi^1_1 \pi^2_1 \pi^3_1 & \quad \text{with } \pi^1_1 \in P(I, x^1,
q_1), \pi^2_1 \in P(q_1, x^2, q_1), \pi^2_1 \in P(q_1, x^3, q)\\
\pi_0 = \pi^1_0 \pi^2_0 \pi^3_0 & \quad \text{with } \pi^1_0 \in P(I, x^1,
r_1), \pi^2_0 \in P(r_1, x^2, r_1), \pi^2_0 \in P(r_1, x^3, r).
\end{align*}
Since $\pi_0$ and $\pi_1$ are shortest paths, the cycles at $q_1$ and
$r_1$ are also shortest paths.  Now, by definition of the states
created by \predis\ all states in $\set{p_1, \ldots, p_k}$, in
particular $r$, are in
$\R$-relation with $q$. By compatibility with the inverse transition
function, this implies that $r_1$ and $q_1$ are also in $\R$-relation.
Thus, by the
$\R$-weak twins property, the weight of the cycle at $q_1$ and that of the
cycle at state $r_1$ in the decompositions above must be equal.
Therefore, we can write
\begin{equation}
  w_n = \wI[\pi'_1] -  \wI[\pi'_0].
\end{equation}
with $\pi'_1 = \pi^1_1 \pi^3_1$ and $\pi'_0 = \pi^1_0 \pi^3_0$. We
have $|\pi'_1| < |\pi_1|$ and $|\pi'_0| < |\pi_0|$. Thus, by induction
on $|x|$, we can find two paths $\pi''_1 \in P(I, x'', q)$ and 
$\pi''_0 \in P(I, x'', r)$ with $w_n = \wI[\pi''_1] -  \wI[\pi''_0]$ and $|x''| \leq |Q|^2 -
1$. Proceeding in the same way for all $x_n$, this shows that
$\set{w_n\colon n \in J}$ is included in the finite
set $A$, which contradicts the fact the number of states
created by the \predis\ construction is infinite.\qed
\end{proof}

\section{Proof of Theorem~\ref{th:det_predis}}

\begin{proof}
  Let $A$ be a determinizable WFA. Assume that the
  application of \predis\ to $A$ generates an infinite set of distinct
  states. Then, as in the proof of Theorem~\ref{th:weak_twins}, this
  implies the existence of two states $q$ and $r$ reachable from the
  initial states by strings $(x_n)_{n \in \Nset}$, and such that the
  set $\set{w_n(q) - w_n(r) \colon n \in \Nset} = \set{ \WI(x_n, q) -
    \WI(x_n, r) \colon n \in \Nset}$ is infinite.

  For any $n \in \Nset$, consider the weighted subset $S_n$
  constructed by weighted determinization which is the set of pairs
  $(p, v)$, where $p$ is a state of $A$ reachable by $x_n$ from the
  initial state and $v$ its \emph{residual weight} defined by $v =
  \WI(x_n, p) - \min_{p' \in \delta(I, x_n)} \WI(x_n, p')$. 
  Since $A$ is determinizable, there can only be finitely many
  distinct $S_n$, $n \in \Nset$.  $S_n$ includes the pairs $(q, v_n)$
  and $(r, v'_n)$ with $v_n = \WI(x_n, q) - v_0$ and $v'_n = 
  \WI(x_n, q) - v_0$, where $v_0$ is the weight of an {\cal I}-shortest-path labeled
  with $x_n$ and starting at the initial states. Since the number of
  distinct weighted subsets $S_n$ is finite, so must be the number of
  distinct pairs $((q, v_n), (r, v'_n))$ they each include. This
  implies that there are only finitely many distinct differences of
  weight in $\set{v'_n - v_n \colon n \in I}$.  But, since $v'_n - v_n
  = \WI(x_n, q) - \WI(x_n, r)$, this contradicts the infiniteness of
  $\set{ \WI(x_n, q) - \WI(x_n, r) \colon n \in \Nset}$. Thus,
  \predis\ cannot generate an infinite number of states and $A$ is
  \predisable.\qed
\end{proof}

\end{document}